\documentclass[letter,11pt,leqno]{amsart}
\usepackage[margin=0.75 in,bindingoffset=0cm,nofoot]{geometry}
\usepackage{amsmath, amssymb, amsthm, stmaryrd, tabu, xcolor}
\usepackage{euscript}
\usepackage{verbatim}
\usepackage{graphicx}

\numberwithin{equation}{section}

\newcommand{\ghat}{\hat{g}}

\newcommand{\del}{\delta}
\newcommand{\bX}{\mathbf{A}}

\newcommand{\bS}{\mathbb{S}}

\newcommand{\inn}[2]{\left\langle  #1 \;, \; #2 \right\rangle}

\newtheorem{theorem}{Theorem}[section]
\newtheorem{prop}[theorem]{Proposition}

\newtheorem{lemma}[theorem]{Lemma}

\newtheorem{remark}[theorem]{Remark}

\newtheorem{question}[theorem]{Question}

\renewcommand{\epsilon}{\varepsilon}
\newcommand{\vep}{\epsilon}

\DeclareMathOperator{\ric}{Ric}

\DeclareMathOperator{\divergence}{div}

\newcommand{\halpha}{\widehat{\alpha}}
\newcommand{\hbeta}{\widehat{\beta}}
\newcommand{\hY}{\widehat{Y}}
\newcommand{\hv}{\widehat{v}}

\begin{document}

\title[Near-horizon geometries]{Deformations of the Kerr-(A)dS Near Horizon Geometry}
\author{Eric Bahuaud}
\address{EB: Department of Mathematics, Seattle University, Seattle, WA, United States}
\email{bahuaude@seattleu.edu}
\author{Sharmila Gunasekaran}
\address{SG: The Fields Institute for Research in Mathematical Sciences, 222 College St, Toronto ON, Canada M5T 3J1}
\address{SG: Address after 1 Jan 2024: Department of Mathematics, Radboud University, Postvak 59 6500, GL Nijmegen,
The Netherlands}
\email{gunasek1@ualberta.ca}
\author{Hari K Kunduri}
\address{HKK: Department of Mathematics and Statistics and Department of Physics and Astronomy, McMaster University, Hamilton ON, Canada L8S 4M1}
\email{kundurih@mcmaster.ca}
\author{Eric Woolgar}
\address{EW: Department of Mathematical and Statistical Sciences and
Theoretical Physics Institute, University of Alberta, Edmonton AB, Canada T6G 2G1}
\email{ewoolgar@ualberta.ca}

\maketitle
\begin{abstract}
We investigate deformations of the Kerr-(A)dS  near horizon geometry (NHG) and derive partial infinitesimal rigidity results for it.
The proof comprises two parts. First, we follow the analysis of Jezierski and Kami\'nski [Gen Rel Grav 45 (2013) 987--1004] to eliminate all but a finite number of Fourier modes of linear perturbations. In the second part, we give an argument using analyticity to prove that there are no odd Fourier modes.  
\end{abstract}

\section{Introduction}

A quasi-Einstein manifold is a solution $(M,g,X)$ of the following equation 
\begin{equation}
\label{qe}
R_{ij} + \frac{(\nabla_j X_i + \nabla_i X_j)}{2} - \frac{1}{m} X_i X_j - \lambda g_{ij} =0, \quad \lambda \in \mathbb{R} \text { and } m \in {\mathbb R}\backslash \{ 0 \},
\end{equation} 
where $M$ is a closed manifold, $g$ is a Riemannian metric on $M$, and $X$ is a one-form.\footnote 
{Some authors restrict $X$ to be gradient, while others permit $m$ and $\lambda$ to be functions. An extended form of the above definition includes cases denoted by $m=0$ and $m=\pm\infty$.}
When $m=2$, a quasi-Einstein manifold is called a \emph{vacuum near horizon geometry}, which will simply be referred to as an NHG here. An NHG arises as an induced ``zoomed-in'' limit at the horizon of an extreme black hole, i.e., a black hole with vanishing surface gravity. Four dimensional extreme black holes with a non-vanishing cosmological constant $\lambda$, i.e., de Sitter (dS) black holes if $\lambda>0$ and anti-de Sitter (AdS) black holes if $\lambda<0$, 
give rise to NHGs  with $\lambda \neq 0$ on ${\mathbb S}^2$.

Classical uniqueness theorems for stationary, axisymmetric black holes don't apply in the extreme case, i.e., black holes with degenerate horizons. Rather, in the asymptotically flat case, uniqueness of the extreme Kerr black hole was proved in \cite{FL09,AHMR09,CN10}. The proof proceeds by a global reduction to a harmonic map problem and makes use of the uniqueness of the corresponding NHG amongst all axisymmetric solutions, i.e., the Kerr NHG on $\mathbb{S}^2$. As remarked in \cite{KL13}, the $\lambda \neq 0$ case does not lend itself to the methods used to prove uniqueness in the asymptotically flat setting. Hence, when $\lambda\neq 0$ there are no known analogues for the uniqueness theorems even with non-extremality. At least in the extreme case, one can then hope to use the NHG to probe the classification/uniqueness issues by investigating possible horizon geometries. 

In fact, the NHG equations allow for other topologies. But it has been shown in \cite{DKLS18} that all NHGs for higher genus topologies are Einstein, i.e., $X \equiv 0$. 

The uniqueness of the Kerr-(A)dS NHG in the family of non-Einstein NHGs with nongradient $X$ (this is called the ``non-static'' case) was very recently proved in \cite{DL23}. As it is very relevant to our current problem, we restate their result here. 

\begin{theorem}[{\cite[Theorem 1.2]{DL23}}]
\label{theorem1.1}
Every non-static NHG $(\mathbb{S}^2,g,X)$ arises from an extreme Kerr-(A)dS black hole.
\end{theorem}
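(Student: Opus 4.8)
The plan is to exploit the fact that the problem is two-dimensional. Since $\dim \bS^2 = 2$, the Ricci tensor is pure trace, $R_{ij} = K g_{ij}$ with $K$ the Gauss curvature, so with $m=2$ the defining equation \eqref{qe} decouples cleanly. Taking the trace gives the scalar constraint $2K + \dv X - \tfrac12 |X|^2 = 2\lambda$, while the trace-free part reduces to $\tf(\nabla_i X_j + \nabla_j X_i) = \tf(X_i X_j)$; that is, the conformal Killing operator applied to $X^\sharp$ is fixed algebraically by $X\otimes X$. I would introduce an isothermal coordinate $z$, write $g = e^{2u}\,|dz|^2$, and record that in these coordinates the trace-free equation collapses to the single complex scalar Riccati equation $\nabla_z X_z = \tfrac12 (X_z)^2$, since $g_{zz}=0$ annihilates the metric and curvature contributions to the $(2,0)$ part. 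Finally I would Hodge-decompose $X = d\psi + \star\, d\chi$ on $\bS^2$, which has no harmonic part as $b_1(\bS^2)=0$; the hypothesis that the NHG be non-static is precisely that the twist potential $\chi$ is non-constant, so that $X^\sharp$ is genuinely non-gradient.

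The heart of the argument is to promote this PDE data to a continuous symmetry: I would aim to show that every non-static solution admits a nontrivial Killing field $\xi$ generating an isometric $\SO(2)$-action. The trace-free equation is a Hopf-differential-type constraint, asserting the vanishing of the trace-free part of a symmetric $2$-tensor built from $X$, so I would manufacture from $X$ and the full system a quadratic differential $Q$ that is holomorphic on $\bS^2$; since there are no nonzero holomorphic quadratic differentials on the genus-zero surface (the dimension $3g-3$ is negative when $g=0$), one gets $Q\equiv 0$, pinning $X$ to a highly symmetric form. Combined with the scalar constraint and the contracted second Bianchi identity applied to \eqref{qe}, I expect this to force the twist function $\star\, dX$ to depend on a single potential whose symplectic-gradient vector field $\xi$ then satisfies Killing's equation $\nabla_{(i}\xi_{j)} = 0$. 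This is the step I expect to be the main obstacle. The delicate points are (i) controlling the zeros of $X$, which are exactly the prospective fixed points (poles) of the rotation, and checking that the candidate symmetry extends smoothly across them, and (ii) ruling out the degenerate alternative in which the argument yields only a gradient $X$ — this is precisely where non-staticity must be used decisively, since an integral (Bochner-type) argument over the compact surface could otherwise collapse the data to the static case.

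Granting the Killing field $\xi$, the classification becomes one-dimensional. I would quotient by the $\SO(2)$-action; the orbit space is a closed interval, and taking the Killing potential as coordinate reduces both the scalar constraint and the trace-free equation to a coupled system of ODEs for the conformal factor and the components of $X$. Integrating this system — a computation of the same type carried out in the earlier axisymmetric classifications — yields a finite-dimensional family of local solutions. Imposing smooth closure at the two poles (regularity of $g$ and $X$, together with $2\pi$-periodicity of the angular coordinate) then cuts the family down, and I would finish by matching the survivors, for each admissible $\lambda$, against the explicit Kerr-(A)dS near-horizon data. The regularity and closure analysis at the poles, rather than the integration itself, is the only remaining subtlety in this final stage.
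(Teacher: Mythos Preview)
The paper does not prove Theorem \ref{theorem1.1}; it merely restates a result from \cite{DL23} and uses it as context for the linearization problem that follows. There is therefore no ``paper's own proof'' to compare against, and any comparison of approaches would have to be with the Dunajski--Lucietti paper rather than with the present one.

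On the substance of your sketch: the overall architecture---reduce the trace-free part of \eqref{qe} to a Riccati-type equation for $X_z$ in an isothermal chart, exploit the vanishing of holomorphic quadratic differentials on the sphere to force hidden symmetry, extract a Killing field, and then integrate the resulting axisymmetric ODE system---is a plausible outline and is in the spirit of how such rigidity results are obtained. However, the passage from ``a certain holomorphic quadratic differential vanishes'' to ``there exists a nontrivial Killing field'' is the entire content of the theorem, and you have not supplied it. You have not specified which quadratic differential $Q$ you build, nor verified that it is actually holomorphic (the Riccati equation $\nabla_z X_z = \tfrac12 (X_z)^2$ by itself does not produce one; you need a further differential consequence combining the trace equation and the Bianchi identity). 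You also have not explained how vanishing of $Q$ singles out a vector field $\xi$ satisfying $\nabla_{(i}\xi_{j)}=0$ rather than merely a conformal Killing field, of which $\bS^2$ has a six-dimensional space regardless of the NHG data. Until those two steps are written down explicitly, the proposal is a strategy rather than a proof.
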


However, one cannot draw global conclusions about the underlying spacetime from knowledge of the NHG alone.
The limiting procedure that produces an NHG from a spacetime loses information about the asymptotic structure of the corresponding extreme black hole. Further, through this procedure, there is also an enhanced symmetry that is manifested in the NHGs (albeit actually absent in the parent spacetime; cf.~\cite{KL13} and the references therein).

Hence, the following (restricted) uniqueness question is still open.
 
 \begin{question}
 \label{question2}
Is the extreme Kerr-(A)dS$_4$ family the unique four-dimensional, stationary black hole solution of the Einstein vacuum equations with a cosmological constant, containing a connected degenerate Killing horizon with spherical horizon cross-sections?
\end{question}

The analogous question for static near horizon geometries for $\lambda >0$ was recently answered in \cite{KL23} establishing the uniqueness of extreme Schwarzschild de-Sitter black hole. In light of Theorem \ref{theorem1.1}, it is natural to ask if the Kerr-(A)dS NHG is an isolated solution in the linearization of equation \eqref{qe} (with $m=2$) in the moduli space of NHGs on $\mathbb{S}^2$. It is well-known from the study of Einstein manifolds that solutions of the linearized Einstein equation integrate to Einstein metrics whenever an obstruction (called formal integrability) vanishes (see \cite{Besse87}).
As an early approach to this question, the authors of \cite{JK13} searched for solutions of the linearization of \eqref{qe} (with $m=2$) about the Kerr NHG and proved that the space of such deformations was at most finite dimensional (a computer algebra analysis by \cite{CST18} eliminated the remaining deformations). Herein, we address the case of arbitrary cosmological constant. It is important to understand the linearized problem and its integrability, even in light of Theorem \ref{theorem1.1}, since the linearized problem is expected to be an important tool to study NHGs with $n\ge 3$ and with nonspherical topology. In the present work we establish the following results. 

\begin{theorem}
\label{theorem3}
There is at most a finite-dimensional space of solutions of the linearization of \eqref{qe} about the Kerr-(A)dS NHG.
For the Kerr-dS case this space is at most 11-dimensional.
\end{theorem}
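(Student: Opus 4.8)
The plan is to linearize the quasi-Einstein equation \eqref{qe} with $m=2$ about the Kerr-(A)dS NHG and analyze the resulting system of PDEs. Since the underlying manifold is $\mathbb{S}^2$ and the Kerr-(A)dS NHG is axisymmetric (i.e., cohomogeneity one, depending only on the polar coordinate once the rotational Killing field is used), I would expand a general linear perturbation $(h, Y)$ of $(g, X)$ into Fourier modes in the azimuthal angle $\phi$. Because the background is $\phi$-independent, the linearized operator commutes with the $\partial_\phi$-action, so the modes decouple and each Fourier mode $e^{i k \phi}$ can be treated separately.

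\begin{proof}[Proof sketch]
First I would set up the linearization of \eqref{qe} explicitly, writing $g \mapsto g + \vep h$ and $X \mapsto X + \vep Y$ and collecting the $O(\vep)$ terms, which involve the linearized Ricci tensor, the linearized covariant derivative of $X$, and the cross terms coming from the $-\frac{1}{m}X_iX_j$ piece. Following the strategy of Jezierski--Kami\'nski \cite{JK13}, I would then reduce each azimuthal Fourier mode to a system of ordinary differential equations in the polar variable, after imposing a suitable gauge (e.g.\ a transverse-traceless or harmonic gauge that removes pure diffeomorphism deformations). The key structural fact to exploit is that on $\mathbb{S}^2$ the perturbation components must extend smoothly across the poles, which forces regularity (boundary) conditions at the endpoints of the polar interval. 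These regularity conditions turn the mode-by-mode ODE systems into singular boundary value problems, and for all but finitely many values of $|k|$ one expects an indicial/eigenvalue analysis to show that no nontrivial smooth solution exists. The heart of the argument is thus to show that the high-frequency modes are obstructed: for $|k|$ large the only globally regular solution of the mode ODEs is the zero solution, leaving at most finitely many surviving modes and hence a finite-dimensional solution space.

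The main obstacle, I expect, will be establishing the vanishing of the high modes rigorously and uniformly in $k$. Unlike a simple Laplace-type eigenvalue problem, the linearized quasi-Einstein system is coupled (mixing $h$ and $Y$) and the background Kerr-(A)dS metric has nonconstant coefficients depending on the polar variable and on the rotation parameter, so the relevant ODE coefficients are genuinely $k$-dependent and do not reduce to a clean hypergeometric form. The technical work is to extract from the coupled system a decoupled scalar (or small) equation whose indicial behavior at the poles can be controlled, and then to show that for $|k|$ exceeding some explicit bound the regularity requirements at both poles are incompatible with any nonzero solution. Controlling the $\lambda$-dependence uniformly, so that the argument works for all cosmological constants simultaneously rather than only for a range of $\lambda$, is the delicate point. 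Once the finite set of admissible modes is isolated, the dimension count in the Kerr-dS case follows by directly solving the finitely many remaining low-mode ODE systems and counting their regular solution spaces, yielding the stated bound of at most $11$.
\end{proof}
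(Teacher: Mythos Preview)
Your Fourier-in-$\phi$ strategy matches the paper, but the mechanism you propose for killing the high modes is not the one that actually works, and this is a genuine gap. You suggest eliminating large $|k|$ by an indicial/regularity analysis at the poles. In the paper the indicial computation (done with the Jezierski--Kami\'nski variables $\alpha=\epsilon^{ij}\Phi_iY_j$, $\beta=\Phi^jY_j$ for $\Phi=X/|X|^2$) gives leading coefficients $4j^2-k^2$; for odd $k$ this never vanishes and one can inductively kill all Taylor coefficients at the poles, proving there are no odd modes. But for even $k=2m$ the indicial equation has an integer root $j=m$, so a regular local solution can start at order $m$ and the pole analysis yields nothing beyond high-order vanishing. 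Thus a purely local argument cannot obstruct large even modes.

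What the paper actually does for the high-mode cutoff is a global $L^2$ estimate: one rewrites the $k$th mode system as $-\mathbf{A}^*\mathbf{A}w-\tfrac{k^2}{\nu^2}\tfrac{\Psi}{f}w+\text{(lower order)}=0$ with $\mathbf{A}=\sqrt{f/\Psi}\,\partial_x$, pairs with $w$, integrates by parts, and applies Cauchy--Schwarz. This yields an explicit inequality of the form
\[
|k|\ \le\ \frac{1}{y}\!\left(\tfrac{\|\widetilde{M_2}\|}{2}+\sqrt{\tfrac{(\|M_1\|+\|M_4\|)^2}{4}+\tfrac{\|\widetilde{M_2}\|^2}{4}+\|M_3\|+\|M_5\|}\right),
\]
where the $M_i$ are explicit matrix coefficients and $y\ge r/(\nu\sqrt{1+a^2})$ in the dS case. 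Estimating these norms gives $|k|\le 11$ for Kerr-dS. The count of $11$ then comes not from solving the surviving ODEs, as you propose, but simply from listing the admissible mode numbers $k=0,2,4,6,8,10$ (odd modes gone by the indicial argument, reality giving $\widehat{\alpha}_{-k}=\overline{\widehat{\alpha}_k}$), with each nonzero mode contributing two real dimensions and the $k=0$ mode shown separately to be one-dimensional.
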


Hence, any nontrivial solution $(\mathbb{S}^2,g,X)$ of the linearization of \eqref{qe} about the Kerr-(A)dS$_4$ NHG 
belongs to the moduli space of Kerr-(A)dS$_4$ NHGs or corresponds to one of the finitely many Fourier modes corresponding to infinitesimal deformations. The counting includes axisymmetric deformations. The extreme Kerr metrics with fixed cosmological constant comprise a 1-parameter family of axisymmetric deformations corresponding to the choice of rotation parameter (or equivalently the mass, since the mass, rotation parameter, and cosmological constant obey a relation; see \S \ref{sec2}). In \S \ref{sec:KerrAdS} we find that the space of axisymmetric linear deformations is one-dimensional, and hence each such a deformation is integrable.

Following the method of \cite{JK13}, we establish the theorem through a Fourier decomposition of the metric $g$ and 1-form $X$ in Fourier modes in the azimuthal angle on ${\mathbb S}^2$. We are able to eliminate all but finitely many of these modes. For the AdS case, the number of modes that we are unable to eliminate varies with the angular momentul parameter $a$; for $\lambda\ge 0$, the number of modes that we cannot eliminate does not depend on $a$. In both cases we can also eliminate odd-number modes through another argument.  It remains open whether the remaining modes can be eliminated as solutions of the linearized equations, or exist but fail to integrate to full solutions of \eqref{qe} due to obstructions.

In \S \ref{sec2}, we introduce the Kerr-(A)dS NHG and review some basic facts about this solution. We state and prove some identities for the one-form $X$ in \S \ref{sec3}. \S \ref{sec:lin} reviews the strategy for linearization. This follows the approach in \cite{JK13}, but is stated in full generality and can be adapted to any two-dimensional closed quasi-Einstein manifold. In \S \ref{sec:KerrAdS} we prove Theorem \ref{theorem3} in two parts. In the Appendix, we use some of the arguments from \S \ref{sec:KerrAdS} and apply them to the Kerr NHG eliminating the zero and odd modes. 

\subsection*{Acknowledgements} The research of EB was partially supported by a Simons Foundation Grant (\#426628, E. Bahuaud). The research of HK was supported by NSERC grant RGPIN--2018--04887. The research of EW was supported by NSERC grant RGPIN--2022--03440.  We thank the Fields Institute for their hospitality.

\section{Background: The Kerr-(A)dS near horizon geometry}
\label{sec2}

The Kerr-AdS$_4$ \cite{GLPP} near horizon geometry is a two-parameter family of solutions to equation \eqref{qe} with $m=2$ and $\lambda =-3/\ell^2$ on $M=\bS^2$. The parameter $\ell > 0$ denotes the radius of curvature at infinity and $a$ is the angular momentum parameter, where $0 < a^2 < \ell^2$. The metric takes the form
\begin{equation}
\begin{split}
g =&\, \frac{\left ( r^2 + a^2 \cos^2\theta\right )}{\left ( 1 - \frac{a^2}{\ell^2}  \cos^2\theta \right )} d\theta^2 + \frac{\sin^2\theta \left ( 1 - \frac{a^2}{\ell^2}  \cos^2\theta \right ) (r^2 + a^2)^2}{\left ( r^2 + a^2 \cos^2\theta\right ) \left ( 1 - \frac{a^2}{\ell^2}\right )^2} d\phi^2, \\ 
X =&\, \frac{2a^2 \cos\theta \sin \theta}{\left ( r^2 + a^2 \cos^2\theta\right )} d\theta + \frac{2 a r \sin^2\theta \left ( 1 - \frac{a^2}{\ell^2}  \cos^2\theta \right ) \left (r^2 + a^2\right )}{\left ( r^2 + a^2 \cos^2\theta\right )^2 \left ( 1 - \frac{a^2}{\ell^2}\right )} d\phi .
\label{KerrAdSmetric}
\end{split}
\end{equation} 
Here $\theta \in (0,\pi)$ and $\phi$ is required to be $2\pi$-periodic. Observe that
\begin{equation}
|X|^2 =\frac{ 4a^2 \left ( 1 - \frac{a^2}{\ell^2}  \cos^2\theta \right ) \sin^2\theta}{\left ( r^2 + a^2 \cos^2\theta\right )^2}.
\end{equation} and in particular, $X$ vanishes at the poles $\theta = 0$ and $\theta = \pi$. The restriction $a^2 < \ell^2$ ensures that $\frac{a^2}{\ell^2}<1$, and $\frac{a^2}{\ell^2} \cos^2\theta  <1$. The parameters  $(r,a)$ are both positive, and are related by 
\begin{equation} \label{eqn:rel-r-a}
\begin{split}
0=&\, r^2 - a^2 + \frac{r^2}{\ell^2}(3r^2 + a^2),
\end{split}
\end{equation} The relation \eqref{eqn:rel-r-a} fixes one of the parameters $r$ or $a^2$ in terms of the other. Indeed, 
\begin{equation}\label{KerrAdScurve}
a^2 = {\frac { r^2 ( 3r^2 + {\ell}^{2} )}{ \left( {\ell}^{2} - {r^2} \right)}}.
\end{equation} 

Under the rescaling $r \mapsto \ell r $ and $a \mapsto \ell a$ we may remove the $\ell$ dependence in the solution. This is equivalent to fixing $\ell = 1$, so that $\lambda = -3 $ and $0 < a^2 < 1$ for simplicity. Note that 
\begin{equation}
    0 < \frac{r^2}{a^2} < 1, \;\; \text{ and } \;\; 0 < r < 1/\sqrt{3}.
\end{equation} 
We may now regard the Kerr-AdS$_4$ geometry as a one-parameter family of metrics where the parameters $(a,r)$ lie on the curve defined by \eqref{KerrAdScurve}. With the transformation $x = \cos \theta$, the near horizon geometry metric can be expressed in the form
\begin{equation}
\begin{split}
\label{metric:KerrAdS}
g =&\, \frac{\Psi(x)}{f(x)} dx^2 + \nu^2 \frac{f(x)}{\Psi(x)} d \phi^2 ,\\ 
X =&\, - \frac{2 a^2 x}{\Psi(x)} dx + \frac{2 a \nu r f(x)  d \phi}{\Psi(x)^2} ,
\end{split}
\end{equation}
where $x=\pm 1$ are poles (where $\partial_\phi$ degenerates) and
\begin{equation} \label{metric:comps}
\nu =  \frac{r^2 + a^2}{1 - a^2}, \quad \Psi(x) = r^2 + a^2 x^2, \quad  f(x) = \left( 1 - a^2 x^2 \right) (1-x^2).
\end{equation}
So we have
\begin{align}
\frac{f(x)}{\Psi(x)} = \frac{\left( 1 - a^2 x^2 \right) (1-x^2)}{r^2 + a^2 x^2}, \; \mbox{and} \; |X|^2 = \frac{4 a^2 f}{\Psi^2}.
\end{align}
Note that $f/\Psi$ has simple zeroes at the poles $x=\pm 1$. The other zeroes are outside of the domain of interest. We can retrieve the Kerr-dS NHG from the above using $\ell^2 \mapsto - \ell^2$. This redefines the functions appearing in the Kerr-AdS NHG as follows.
\begin{equation} \label{metric:comps:dS}
\nu =  \frac{r^2 + a^2}{1 + a^2}, \quad \Psi(x) = r^2 + a^2 x^2, \quad  f(x) = \left( 1 + a^2 x^2 \right) (1-x^2).
\end{equation}
Here $\lambda = 3$. The equivalent of \eqref{eqn:rel-r-a} in this case is
\begin{equation} \label{KerrdScurve}
    a^2 = \frac{r^2(\ell^2 - 3r^2)}{r^2 + \ell^2}.
\end{equation}
With the rescaling $r \mapsto \ell r$ and $a\mapsto \ell a$, we get the following for the ranges of $r$ and $a$
\begin{align}
    a^2 = r^2 - \frac{4r^4}{1+ r^2} < r^2 \,, \quad 0 < r < 1/\sqrt{3}\,, \quad \sup a = \sqrt{7 - 4 \sqrt{3}} \quad \text{ and } \quad r|_{\sup 
 a} = \sqrt{\frac{2}{\sqrt{3}} - 1}.
\end{align}

\section{Remarks concerning the quasi-Einstein one-form $X$}
\label{sec3}

In this section, we take up a few results concerning the covector field $X$ participating in a solution to equation \eqref{qe} on a compact surface.  Fundamental to the Jezierski-Kami\'nski method is the auxiliary one-form  given by
\[ \Phi := \frac{X}{|X|^2}, \]
where $X$ is non-vanishing.

\begin{lemma}\label{lemma2.1}
Let $(M^2,g,X)$ be a solution to equation \eqref{qe} for some $m \in (0,\infty)$ on a closed surface. Let $U\subset M$ be an open connected domain on which the 1-form $X$ has no zeroes. Then $\Phi$ is a closed one-form on $U$.
\end{lemma}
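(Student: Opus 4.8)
The plan is to verify directly and pointwise that $d\Phi = 0$ on $U$, using crucially that $M$ is two-dimensional. First I would record the exterior-derivative identity valid wherever $|X|^2 \neq 0$,
\begin{equation}
d\Phi = d\left( \frac{X}{|X|^2} \right) = \frac{1}{|X|^2}\, dX - \frac{1}{|X|^4}\, d|X|^2 \wedge X,
\label{dphi}
\end{equation}
so that the assertion is equivalent to the purely algebraic, pointwise identity $|X|^2\, dX = d|X|^2 \wedge X$ on $U$. No integration or global hypothesis on $U$ enters, since closedness is local; connectedness of $U$ is irrelevant.

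Next I would feed in \eqref{qe} together with two-dimensionality. In dimension two $R_{ij} = \tfrac12 \scal\, g_{ij}$, so \eqref{qe} pins down only the symmetric part of $\nabla X$:
\begin{equation}
S_{ij} := \tfrac12\left( \nabla_i X_j + \nabla_j X_i \right) = \frac{1}{m} X_i X_j + \mu\, g_{ij}, \qquad \mu := \lambda - \tfrac12 \scal ,
\label{sympart}
\end{equation}
while the antisymmetric part $A_{ij} := \tfrac12(\nabla_i X_j - \nabla_j X_i) = \tfrac12 (dX)_{ij}$ is left free. The key observation is then that $\tfrac12 \nabla_i |X|^2 = X^k \nabla_i X_k = X^k S_{ik} + X^k A_{ik}$, and by \eqref{sympart} the symmetric contribution $X^k S_{ik} = (\tfrac{1}{m}|X|^2 + \mu)\, X_i$ is proportional to $X_i$. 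Hence it drops out of the wedge product $d|X|^2 \wedge X$, and only the antisymmetric piece $X^k A_{ik}$ survives there.

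The heart of the matter is a two-dimensional Hodge identity. On a surface every $2$-form is a function times the volume form, so $(dX)_{ij} = (\ast\, dX)\, \epsilon_{ij}$ with $\epsilon_{ij}$ the Levi-Civita tensor, i.e.\ $A_{ij} = \tfrac12(\ast dX)\epsilon_{ij}$. Writing $\ast X$ for the pointwise quarter-turn rotation of $X$, this gives $X^k A_{ik} = \tfrac12(\ast dX)(\ast X)_i$, and one has the algebraic identity $(\ast X)\wedge X = |X|^2\, \mathrm{dvol}_g$. Substituting the antisymmetric part of $\nabla|X|^2$ into the second term of \eqref{dphi} yields $d|X|^2 \wedge X = (\ast dX)\,(\ast X)\wedge X = (\ast dX)\,|X|^2\,\mathrm{dvol}_g$, which is exactly $|X|^2\, dX$. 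Thus $d\Phi = 0$ by \eqref{dphi}.

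I expect the main obstacle to be bookkeeping rather than conceptual. The nontrivial point is that $dX$ itself need not vanish: its contribution to the first term of \eqref{dphi} is precisely cancelled by its contribution to the second term through $X^k A_{ik}$. Getting this cancellation right hinges on a single, consistent choice of orientation and sign conventions for $\ast$ and $\epsilon_{ij}$; with an inconsistent choice the two terms add rather than cancel and one is spuriously led to the false conclusion $dX = 0$. I would therefore fix conventions once and verify $(\ast X)\wedge X = |X|^2\,\mathrm{dvol}_g$ in a local orthonormal coframe before assembling \eqref{dphi}. It is worth noting that the argument uses only that $S_{ij}$ has the form $\tfrac1m X\otimes X + (\text{scalar})\,g$, so that $S_{ij}X^j \parallel X_i$; this is the entire role played by \eqref{qe}.
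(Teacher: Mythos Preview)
Your proof is correct and follows essentially the same approach as the paper's: both split $\nabla X$ into its symmetric and antisymmetric parts, use the quasi-Einstein equation in dimension two to see that $X^k S_{ik}$ is proportional to $X_i$ (hence drops out of the wedge with $X$), and then exploit the fact that every $2$-form on a surface is a scalar multiple of the volume form to match the two $dX$ contributions. The only difference is packaging---you work with $d$, $\wedge$, and the Hodge star, while the paper carries out the identical computation in index notation with $\epsilon_{ij}$.
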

\begin{proof}
Recalling that on a surface $R_{ij} = K g_{ij}$, where $K$ is the Gauss curvature, equation \eqref{qe}, implies
\begin{equation} \label{eq2.1}
\nabla_i X_j = \frac{1}{m}X_iX_j + \left ( \lambda - K\right ) g_{ij} + \frac{1}{2} \left ( \nabla_i X_j -\nabla_j X_i\right ).
\end{equation} 
Now
\begin{equation}
\nabla_i \Phi_j -\nabla_j \Phi_i = \frac{1}{|X|^2} \left ( \nabla_i X_j -\nabla_j X_i \right ) - \frac{2}{\left ( | X|^2\right )^2} X_j X^k \nabla_i X_k + \frac{2}{\left ( | X|^2\right )^2} X_i X^k \nabla_j X_k. 
\end{equation} 
Using \eqref{eq2.1}, this becomes
\begin{align}
\nabla_i \Phi_j -\nabla_j \Phi_i &= \frac{1}{|X|^2} \left ( \nabla_i X_j -\nabla_j X_i \right )  + \frac{X^kX_i}{ \left ( |X|^2\right )^2}\left ( \nabla_j X_k-\nabla_kX_j\right) \nonumber \\
& - \frac{X^kX_j}{ \left ( |X|^2\right )^2}\left ( \nabla_i X_k-\nabla_kX_i\right ) .
\end{align}
The last two terms on the right-hand side are pairwise antisymmetric in their free indices. Any 2-form in two dimensions is proportional to the volume form; i.e., if $\beta$ is a 2-form then
\begin{equation}
\beta_{ij} = \frac{\beta_{mn}\epsilon^{mn}}{2} \epsilon_{ij},
\end{equation} 
where $\epsilon_{ij}$ represents the volume form (not the tensor density). Recall that this satisfies the identity $\epsilon^{kj} \epsilon_{ki} = \delta^{j}_{~i}$. If we write $\nabla_i X_j-\nabla_j X_i =:  \epsilon_{ij}F$ for $F := \epsilon^{ij} \nabla_iX_j$, then a calculation shows that
\begin{equation}
\left [\frac{X^kX_i}{ \left ( |X|^2\right )^2}\left ( \nabla_j X_k-\nabla_kX_j\right ) - \frac{X^kX_j}{ \left ( |X|^2\right )^2}\left ( \nabla_i X_k-\nabla_kX_i\right )\right ] = -\frac{2F}{|X|^2} \epsilon_{ij},
\end{equation} 
and then $\epsilon^{ij} \left ( \nabla_i \Phi_j -\nabla_j \Phi_i\right )=0$. Thus $d\Phi =0$ as claimed. 
\end{proof}

It will be important to know when $X$ vanishes.  While topological considerations force the vanishing of $X$ if the background manifold is a sphere, the next lemma guarantees a zero more generally.

\begin{lemma}\label{lemma2.2}
Let $(M^2,g,X)$ be a solution of equation \eqref{qe} for some $m \in (0,\infty)$ on a closed surface. Then $X$ must have a zero and
\begin{align}
\divergence \Phi=-\frac{1}{m}.
\end{align}
\end{lemma}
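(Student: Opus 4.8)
The plan is to prove the two assertions in the order opposite to how they are stated: first I would establish the pointwise identity $\divergence \Phi = -1/m$ on any open set where $X$ is non-vanishing, since this identity is the engine that then forces a zero.

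For the identity I would compute directly. Writing $\Phi_j = X_j/|X|^2$ and using $\nabla_i |X|^2 = 2X^k\nabla_i X_k$, differentiation gives
\[ \nabla_i \Phi_j = \frac{\nabla_i X_j}{|X|^2} - \frac{2 X_j X^k \nabla_i X_k}{(|X|^2)^2}, \]
whence, tracing with $g^{ij}$,
\[ \divergence \Phi = \frac{\divergence X}{|X|^2} - \frac{2 X^j X^k \nabla_j X_k}{(|X|^2)^2}. \]
The two ingredients I would then supply both come from equation \eqref{eq2.1}: its trace (the antisymmetric part dropping out against $g^{ij}$) gives $\divergence X = |X|^2/m + 2(\lambda - K)$, and its contraction against the symmetric tensor $X^j X^k$ (again the antisymmetric part dropping out) gives $X^j X^k \nabla_j X_k = |X|^4/m + (\lambda - K)|X|^2$. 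Substituting these, the curvature-dependent terms $\lambda - K$ cancel and the remaining $1/m$ contributions combine to leave precisely $-1/m$. The feature worth emphasising is that a clean \emph{constant} survives, independent of the geometry.

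For the existence of a zero I would argue by contradiction. If $X$ had no zero on the closed surface $M$, then $\Phi = X/|X|^2$ would be a globally defined smooth one-form (here $|X|^2$ is bounded away from $0$ on the compact $M$), and the identity just derived would hold everywhere. Integrating over $M$, the divergence theorem gives $0 = \int_M \divergence \Phi \, dV = -\vol(M)/m$, which is impossible because $m \in (0,\infty)$ and $\vol(M)>0$. Hence $X$ must vanish somewhere.

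I do not expect a serious obstacle. The only points requiring care are keeping the symmetry and antisymmetry bookkeeping straight when contracting \eqref{eq2.1} (this is exactly what makes the $\lambda - K$ terms cancel), and checking that $\Phi$ really extends to a smooth global one-form in the no-zero scenario so that the divergence theorem is applicable. The conceptual heart is simply that the integral of a divergence over a closed manifold vanishes, so no smooth one-form there can have a constant nonzero divergence — forcing $X$ to have a zero.
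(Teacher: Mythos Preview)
Your proposal is correct and follows essentially the same approach as the paper: compute $\divergence\Phi$ directly from the quasi-Einstein equation and then use the divergence theorem on the closed surface to force a zero of $X$. The only cosmetic difference is that the paper carries out the divergence calculation in $n$ dimensions (using the full Ricci tensor and scalar curvature) before specialising to $n=2$ via $\ric = \tfrac{R}{2}g$, whereas you work from the surface-specific identity \eqref{eq2.1} from the outset; the algebra and the conclusion are the same.
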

\begin{proof}
Working momentarily in $n$-dimensions, from the quasi-Einstein equation \eqref{qe} we have
\begin{equation}
\begin{aligned}
\divergence \Phi =&\, \frac{\divergence X}{|X|^2} -\frac{\nabla_X \left ( | X|^2\right )}{\left ( | X|^2\right )^2}\\
= &\, \frac{1}{m}+\frac{n\lambda}{| X|^2} -\frac{R}{| X|^2}-\frac{X^kX^l\left ( \nabla_k X_l+\nabla_l X_k\right )}{\left ( | X|^2\right )^2}\\
=&\, \frac{1}{m}+\frac{n\lambda}{| X|^2} -\frac{R}{| X|^2}-\left [ \frac{2}{m}+\frac{2\lambda}{| X|^2} -\frac{2\ric(X,X)}{\left ( | X|^2\right )^2}\right ]\\
=&\, -\frac{1}{m}+\frac{(n-2)\lambda}{|X|^2} +\frac{2\ric(X,X)}{\left ( | X|^2\right )^2}-\frac{R}{| X|^2}.
\end{aligned}
\end{equation}
On a surface, we have $\displaystyle \ric=\frac{R}{2}g$ and thus
\begin{equation}
\label{eq3.38}
\divergence \Phi=-\frac{1}{m}.
\end{equation}
For finite $m$ this has no solutions on any compact manifold. Hence $\Phi=X/|X|^2$ cannot be globally defined, and so $X$ must have a zero.
\end{proof}

\section{Linearization of the quasi-Einstein equation} \label{sec:lin}

In this section we begin by reviewing our strategy to prove uniqueness of the Kerr-AdS near horizon geometry.  We then compute the linearization of the quasi-Einstein equation on a surface, and set up the method of \cite{JK13}.

\subsection{A review of the strategy}

The quasi-Einstein operator can be thought of as a nonlinear smooth map
\begin{align}
\begin{aligned}
\mathcal{Q}: C^{\infty}(M; \Sigma^2( T^* M) ) \times C^{\infty}(M; T^*M) \longrightarrow C^{\infty}(M; \Sigma^2( T^* M) ), \\
\mathcal{Q}(g,X) = R_{ij} + \frac{(\nabla_j X_i + \nabla_i X_j)}{2} - \frac{1}{m} X_i X_j - \lambda g_{ij},
\end{aligned}
\end{align}
where $\Sigma^2(T^*M)$ is the bundle of symmetric $2$-tensors.  The quasi-Einstein equation \eqref{qe} now reads $\mathcal{Q}(g,X) = 0$.

Suppose that for $s \in (-\vep,\vep)$, $(g(s), X(s))$ is a smooth curve of solutions $\mathcal{Q}(g(s),X(s)) = 0$ to the quasi-Einstein equation. Differentiating in $s$, and writing $\dot{g} = g'(0)$, $\dot{X} = X'(0)$, we obtain that the linearization $\mathcal{L}$ of $\mathcal{Q}$ at $(g(0),X(0))$ satisfies
\begin{equation} 
\mathcal{L}_{(g(0),X(0))}  (\dot{g},\dot{X}) = \left. \frac{d}{ds} \mathcal{Q}(g(s),X(s)) \right|_{s=0} = 0, 
\end{equation}
i.e., $(\dot{g},\dot{X})$ is an infinitesimal deformation of the quasi-Einstein equations at $(g(0),X(0))$. Thus a smooth curve of solutions to the quasi-Einstein equations yields a nontrivial kernel to this linear operator. The strategy to prove uniqueness statements is thus to show that a putative element of this kernel must be trivial, or in the case of Theorem \ref{theorem3}, that the space of solutions (modulo diffeomorphisms) must be finite dimensional.

Since we work on ${\mathbb S}^2$, a conformal approach to restrict the diffeomorphism freedom is natural and will simplify the linearization. It is possible to express the metric variation $g(s)$ as a conformal multiple of the `background' metric $g(0)$. In fact, slightly more is true. Thanks to the geometric nature of equation \eqref{qe}, we may express the metric variation $g(s)$ as a conformal multiple of the pullback of $g(0)$ by an $s$-dependent diffeomorphism. This additional diffeomorphism will be used in \S \ref{sec:KerrAdS} to set initial conditions by allowing us to fix the locations where the norm of the covector field $X(s)$ vanishes.

We will study the general linearization of the quasi-Einstein equations on surfaces in \S \ref{sec:linearization-surfaces} and then discuss the method of \cite{JK13} in \S \ref{sec:JK}. We specialize to the case of the Kerr-(A)dS near horizon geometries in \S \ref{sec:KerrAdS}.


\subsection{Linearization computation on surfaces} \label{sec:linearization-surfaces}

Assume that $(M^2, g(s), X(s))$ satisfies $\mathcal{Q}(g(s),X(s)) = 0$ for $s \in (-\vep,\vep)$. 

In order to compute the linearization we first write $g(s)$ as a conformal variation in terms of $g_0 = g(0)$.  Since $M$ is a homeomorphic to a sphere, for each $s$ we may solve for a smooth conformal factor $\Omega(s)$ so that $g(s) = e^{2 \Omega(s)} g_{\star}$, with $g_{\star}$ a metric with constant positive curvature on the sphere.  Since this is also true for $s=0$, we can also write $g(s) = e^{2 (\Omega(s) - \Omega(0))} g_0$.

In order to help us choose boundary conditions for the partial differential equations satisfied by the linearization\footnote{This is done for the Kerr-(A)dS family of metrics at the beginning of \S \ref{sec:KerrAdS}.}, it will also be helpful to allow for an $s$-dependent diffeomorphism of $M$ that allows us to fix the locations of the vanishing of $|X(s)|$.  Thus set $u(s) = \Omega(s) - \Omega(0)$ and assume $\theta_s: M \to M$ is a smooth family of smooth diffeomorphisms so that we may express the metric as
\begin{equation} \label{eqn:conf}
g(s) = e^{2 u(s)} \theta_s^* g_0.
\end{equation}

Let us re-express equation \eqref{qe} in terms of the metric $g_0$ in equation \eqref{eqn:conf}. Introduce $\ghat_s = \theta_s^* g_0$ as a temporary device.  The conformal transformation of the metric has no effect on the one-form $X$. The covariant derivative of a one-form transforms as
\begin{equation}
\nabla^{g(s)}_i X_j = \nabla^{\ghat_s}_i X_j + \inn{X}{du}_{\ghat_s} (\ghat_s)_{ij} - X_j \nabla^{\ghat_s}_i u - X_i \nabla^{\ghat_s}_j u. 
\end{equation}
The equation for Gaussian curvature under a conformal change of metric allows us to rewrite the Ricci curvature term of $g(s)$ as
\begin{equation} \label{eqn:curv-inv}
R_{ij}(g(s)) = K(g(s)) g_{ij}(s) = \left(K(\ghat_s) - \Delta_{\ghat_s} u(s) \right) (\ghat_s)_{ij}.
\end{equation}

Now, naturality of curvature implies that equation \eqref{eqn:curv-inv} may be written
\begin{equation} 
R_{ij}(g(s)) = \theta_s^* \left(K(g_0) - \Delta_{g_0} \widehat{u}(s) \right) (g_0)_{ij},
\end{equation}
where we have introduced $\widehat{u}(s) = u(s) \circ \theta_s^{-1}$.  In a similar way, by defining the one-form $\widehat{X} = (\theta_s^{-1})^* X$, we may express the remaining terms in equation \eqref{qe} as the pullback via $\theta_s$.  
\begin{align}
\begin{aligned}
0=&\theta_s^* \left( \left(K(g_0) - \Delta_{g_0} \widehat{u}(s) \right) (g_0)_{ij} - \lambda e^{2 \widehat{u}(s)} (g_0)_{ij} 
+\frac{1}{2} \nabla^{g_0}_j \widehat{X}_i(s) + \frac{1}{2} \nabla^{g_0}_i \widehat{X}_j(s) \right. \\
& \left. + \inn{ \widehat{X}(s)}{d\widehat{u}(s)}_{g_0} (g_0)_{ij} - \widehat{X}(s)_i \nabla^{g_0}_j \widehat{u}(s) - \widehat{X}_j(s) \nabla^{g_0}_i \widehat{u}(s) - \frac{1}{m} \widehat{X}_i(s) \widehat{X}_j(s) \right). 
\end{aligned}
\end{align}

We may thus compose with the inverse diffeomorphism to obtain the equation $\mathcal{Q}(g(s),X(s))=0$ in terms of $g_0$.  In order to lighten the notation, we now drop the hat from $u$ and $X$, recalling that these quantities are composed with the diffeomorphism. We obtain
\begin{align} \label{eqn:conformal-QE}
\begin{aligned}
0=&\left(K(g_0) - \Delta_{g_0} u(s) \right) (g_0)_{ij} - \lambda e^{2 u(s)} (g_0)_{ij} 
+\frac{1}{2} \nabla^{g_0}_j X_i(s) + \frac{1}{2} \nabla^{g_0}_i X_j(s) \\
& + \inn{ X(s)}{du(s)}_{g_0} (g_0)_{ij} - X(s)_i \nabla^{g_0}_j u(s) - X_j(s) \nabla^{g_0}_i u(s) - \frac{1}{m} X_i(s) X_j(s). 
\end{aligned}
\end{align}
We emphasize that all curvatures, covariant derivatives, and metric contractions are with respect to $g_0$ in this equation.

We now linearize by differentiating in $s$ and evaluating at $s=0$. The only $s$-dependence in the equation above occurs through $u$ and $X$, thus we introduce the variation variables
\begin{align}
v := u'(0), \; \; \; Y := X'(0)
\end{align}
for the first-order perturbations of these quantities.  The linearization of equation \eqref{eqn:conformal-QE} is thus
\begin{align} \label{eqn:linearization1}
\begin{aligned}
0&= - \Delta_{g_0} v \, (g_0)_{ij} - 2 \lambda v \, (g_0)_{ij}  + \frac{1}{2} \nabla^{g_0}_j Y_i + \frac{1}{2} \nabla^{g_0}_i Y_j \\&+  \inn{ X}{dv}_{g_0} (g_0)_{ij} - X_i \nabla^{g_0}_j v - X_j \nabla^{g_0}_i v - \frac{1}{m} X_i Y_j - \frac{1}{m} Y_i X_j. 
\end{aligned}
\end{align}

Having completed the linearization, we set $g = g_0$ to lighten the notation, with the understanding that all covariant derivatives and metric contractions are with respect to $g_0$.  The trace part of equation  \eqref{eqn:linearization1} can be written as

\begin{align} \label{eqn:linearization2}
0=\nabla^i ( Y_i - 2 \nabla_i v ) - 4 \lambda v - \frac{2}{m} \inn{X}{Y}.
\end{align}
The tracefree part of the equation \eqref{eqn:linearization1} may be written as
\begin{equation} \label{eqn:linearization3}
\begin{split}
0=&\, \frac{1}{2} \nabla_j Y_i + \frac{1}{2} \nabla_i Y_j - \frac{1}{2} \nabla^k Y_k g_{ij} +  \inn{ X}{dv} g_{ij} -  X_i \nabla_j v -  X_j \nabla_i v - \frac{1}{m} X_i Y_j\\
&\, - \frac{1}{m} Y_i X_j  + \frac{1}{m} \inn{X}{Y} g_{ij}.
\end{split}
\end{equation}

\subsection{The method of \cite{JK13}} \label{sec:JK} 

So far the computations of the previous section have been completely general to a quasi-Einstein metric on a compact surface.  We now follow the method of   Jezierski-Kami\'nski and cast it slightly more generally.  For this part of the argument, we need only assume the background quasi-Einstein metric $(g,X)$ can be expressed in local coordinates in diagonal form,
\begin{align} \label{eqn:metric-ansatz}
g = g_{xx}(x,\phi) dx^2 + g_{\phi \phi}(x,\phi) d\phi^2,
\end{align}
with nonvanishing components on its domain.  We specialize to the Kerr-(A)dS metrics in a later section.

We write $\Phi = X / |X|^2$, and assume that $\Phi$ is defined on the domain of the local coordinates above. The method of \cite{JK13} is to project the variation vector field $Y$ onto $\Phi$ and its orthogonal complement. To this end, we introduce functions
\begin{equation}
\label{alphabeta}
\begin{split}
\alpha :=& \vep^{ij} \Phi_i Y_j = \nu^{-1} (\Phi_x Y_{\phi} - \Phi_{\phi} Y_x), \\
\beta :=& \Phi^j Y_j = \Phi^x Y_x + \Phi^{\phi} Y_{\phi}.
\end{split}
\end{equation}
The next proposition summarizes the rewriting of the the linearization equations in terms of these quantities. 

\begin{prop}
Given the curve of quasi-Einstein metrics of Section \ref{sec:linearization-surfaces} expressed as in equation \eqref{eqn:conf}, the linearized equations \eqref{eqn:linearization2} and \eqref{eqn:linearization3} imply
\begin{equation} \label{eqn:perturbation}
\begin{gathered}
0=\Delta \alpha +  \frac{1}{m} \vep^{ij}\nabla_j Y_i,\\
-4 \lambda v = \Delta \beta - \left( 1 + \frac{1}{m} \right) \nabla^i Y_i + \frac{2}{m} |X|^2 \beta ,\\
\nabla_i v = -\frac{1}{2} {\vep_i}^j \nabla_j \alpha + \frac{1}{2} \nabla_i \beta - \frac{1}{2m} Y_i,
\end{gathered}
\end{equation}
where $\alpha$ and $\beta$ are defined in \eqref{alphabeta}.  Note all derivatives and metric contractions are with respect to $g_0$.
\end{prop}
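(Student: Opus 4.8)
The plan is to derive all three identities in \eqref{eqn:perturbation} from the single vector identity on the third line; the first two then follow by differentiating it. Throughout I would lean on three structural facts about $\Phi = X/|X|^2$: that it is closed, so $\nabla_i \Phi_j = \nabla_j \Phi_i$ (Lemma \ref{lemma2.1}); that $\divergence \Phi = \nabla^i \Phi_i = -\tfrac{1}{m}$ (Lemma \ref{lemma2.2}); and the pointwise algebraic relations $X_i = |X|^2 \Phi_i$ and $\Phi^i X_i = 1$, from which $\Phi^i Y_i = \beta$ and $\inn{X}{Y} = |X|^2 \beta$.

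First I would establish the vector identity $\nabla_i v = -\tfrac12 {\vep_i}^j \nabla_j \alpha + \tfrac12 \nabla_i \beta - \tfrac1{2m} Y_i$. Contracting the tracefree equation \eqref{eqn:linearization3} with $\Phi^j$ and using $\Phi^j \nabla_i Y_j = \nabla_i \beta - Y_j \nabla_i \Phi^j$ produces a covector equation that one solves for $\nabla_i v$. Two pleasant cancellations occur along the way: the two terms carrying a derivative of $v$, namely $(X^k \nabla_k v)\Phi_i$ and $-X_i \Phi^j \nabla_j v$, cancel because $X_i = |X|^2 \Phi_i$; and the pair $-\tfrac1m X_i \beta$ and $\tfrac1m \inn{X}{Y}\Phi_i$ cancel for the same reason. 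What survives is a combination of $\Phi^j \nabla_j Y_i$, $Y_j \nabla_i \Phi^j$, $(\nabla^k Y_k)\Phi_i$ and $Y_i$, and the crux is to recognize this combination as $-\tfrac12 {\vep_i}^j \nabla_j \alpha$. To see this I would expand ${\vep_i}^j \nabla_j \alpha = \vep_{ik}\nabla^k(\vep^{pl}\Phi_p Y_l)$ using the two-dimensional identity $\vep_{ik}\vep^{pl} = \delta_i^p \delta_k^l - \delta_i^l \delta_k^p$: the trace piece then produces $-(\divergence\Phi)Y_i = \tfrac1m Y_i$, which fixes the $Y_i$ coefficient, while closedness $\nabla^k \Phi_i = \nabla_i \Phi^k$ converts the remaining derivative-of-$\Phi$ term into $Y_j \nabla_i \Phi^j$, so that everything matches. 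This is the step I expect to be the main obstacle, since it is the one place where both lemmas and the epsilon calculus must be deployed together.

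With the vector identity in hand, the other two lines follow by differentiating it. Applying the curl operator $\vep^{ki}\nabla_k$ annihilates the two gradient (Hessian) terms coming from $\nabla_i v$ and $\tfrac12\nabla_i\beta$, and the identity $\vep^{ki}{\vep_i}^j = -g^{kj}$ turns the $\alpha$-term into $\tfrac12 \Delta \alpha$, giving the first line $0 = \Delta \alpha + \tfrac1m \vep^{ij}\nabla_j Y_i$. Applying instead the ordinary divergence $\nabla^i$ kills the skew term ${\vep_i}^j \nabla_j \alpha$ and leaves $\Delta v = \tfrac12 \Delta \beta - \tfrac1{2m}\nabla^i Y_i$. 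Finally I would eliminate $\Delta v$ using the trace equation \eqref{eqn:linearization2}, which rearranges to $2\Delta v = \nabla^i Y_i - 4\lambda v - \tfrac2m \inn{X}{Y}$, and substitute $\inn{X}{Y} = |X|^2\beta$; collecting terms yields the second line $-4\lambda v = \Delta \beta - \left(1 + \tfrac1m\right)\nabla^i Y_i + \tfrac2m |X|^2 \beta$. All of the derivative-of-$\alpha,\beta$ bookkeeping in this last paragraph is routine once the vector identity is secured.
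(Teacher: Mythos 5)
Your proposal is correct, and its overall architecture matches the paper's: first establish the covector identity for $\nabla_i v$, then get the $\alpha$-equation from a curl (equivalently, symmetry of the Hessian of $v$) and the $\beta$-equation from the divergence of that identity combined with the trace equation \eqref{eqn:linearization2}. The one genuine difference is how you reach the intermediate formula $\nabla_i v = \tfrac12\Phi^j\nabla_j Y_i + \tfrac12\Phi^j\nabla_i Y_j - \tfrac12\Phi_i\nabla^j Y_j - \tfrac1m Y_i$. The paper takes the $xx$ and $x\phi$ components of \eqref{eqn:linearization3} in the diagonal coordinates of the ansatz \eqref{eqn:metric-ansatz}, solves the resulting $2\times 2$ linear system for $(\nabla_x v,\nabla_\phi v)$ by explicit matrix inversion, and only afterwards verifies the covariant form \eqref{eqn:solved-for-v1}. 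You instead contract \eqref{eqn:linearization3} with $\Phi^j$; the cancellations you invoke (both driven by $X_i = |X|^2\Phi_i$, so that $(X^k\nabla_k v)\Phi_i$ cancels $X_i\Phi^j\nabla_j v$, and $\tfrac1m\inn{X}{Y}\Phi_i$ cancels $\tfrac1m X_i\beta$) do occur, and the surviving terms give exactly the paper's equation \eqref{eqn:solved-for-v2}. Your route is coordinate-free and never uses the diagonal ansatz, only that $X$ is nonvanishing so that $\Phi$ exists; and no information is lost, since contraction with a nonvanishing vector is injective on symmetric tracefree $2$-tensors in two dimensions (though for the stated implication only one direction is needed). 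Your subsequent recognition of the combination as $-\tfrac12{\vep_i}^j\nabla_j\alpha$, via the identity $\vep_{ik}\vep^{pl} = \delta_i^p\delta_k^l - \delta_i^l\delta_k^p$ together with $d\Phi=0$ and $\divergence\Phi = -\tfrac1m$, is the same computation as the paper's product-rule regrouping in \eqref{eqn:solved-for-v3}, just organized in the opposite direction, and all signs and coefficients check out, as do both differentiations yielding the first two lines of \eqref{eqn:perturbation}.
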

\begin{proof}
The proof is similar to \cite[Appendix B]{JK13}, but we reproduce it here for the convenience of the reader. The metric ansatz of equation \eqref{eqn:metric-ansatz} is used extensively in the calculations that follow to raise and lower indices.  

The first step is to take the tracefree part of the linearized equation \eqref{eqn:linearization3} and express the derivative of $v$ in terms of $Y$.  To begin, first take the $xx$ component of equation \eqref{eqn:linearization3}
\begin{align}
\nabla_x Y_x - \frac{1}{2} \nabla^k Y_k g_{xx} +  \inn{ X}{dv} g_{xx} - 2 X_x \nabla_x v - \frac{2}{m} X_x Y_x + \frac{1}{m} \inn{X}{Y} g_{xx} = 0,
\end{align}
and dividing by $g_{xx}$ and using the explicit diagonal structure to raise indices as necessary, we write this as
\begin{align}
\frac{1}{2} \nabla^x Y_x - \frac{1}{2} \nabla^{\phi} Y_{\phi}  - X^x \nabla_x v + X^{\phi} \nabla_{\phi} v - \frac{1}{m} X^x Y_x + \frac{1}{m} X^{\phi} Y_{\phi} = 0,
\end{align}
and finally we express this equation as 
\begin{align} \label{eqn:linearization4}
 X^x \nabla_x v -  X^{\phi} \nabla_{\phi} v =  \frac{1}{2}\nabla^x Y_x - \frac{1}{2} \nabla^{\phi} Y_{\phi}  -\frac{1}{m} X^x Y_x + \frac{1}{m} X^{\phi} Y_{\phi}.
\end{align}

Next, we take the $x\phi$ component of equation \eqref{eqn:linearization3}, and use that the metric is diagonal to obtain
\begin{align}
\frac{1}{2}\nabla_{\phi} Y_x + \frac{1}{2}\nabla_x Y_{\phi}  -  X_x \nabla_{\phi} v -  X_{\phi} \nabla_x v - \frac{1}{m} X_x Y_{\phi} - \frac{1}{m} Y_x X_{\phi}  = 0,
\end{align}
which we rewrite as
\begin{align} \label{eqn:linearization5}
 X_{\phi} \nabla_x v  + X_x \nabla_{\phi} v = \frac{1}{2} \nabla_{\phi} Y_x + \frac{1}{2} \nabla_x Y_{\phi} - \frac{1}{m} X_x Y_{\phi} - \frac{1}{m} Y_x X_{\phi}.
\end{align}
Equations \eqref{eqn:linearization4} and \eqref{eqn:linearization5}, can then be solved for $\nabla_x v$ and $\nabla_{\phi} v$. Indeed we may write
\begin{gather}
\begin{bmatrix}  X_\phi &  X_x \\ $ $ \\  X^x & -  X^\phi  \end{bmatrix} \begin{bmatrix} \nabla_x v \\ $ $ \\ \nabla_{\phi} v  \end{bmatrix}
= 
\begin{bmatrix}
\displaystyle \frac{1}{2} \nabla_\phi Y_x  +\frac{1}{2}  \nabla_x Y_\phi
- \frac{1}{m} X_x Y_\phi - \frac{1}{m} X_\phi Y_x \\ $ $ \\
\displaystyle \frac{1}{2}\nabla^x Y_x  - \frac{1}{2} \nabla^\phi Y_\phi  - \frac{1}{m} X^x Y_x + \frac{1}{m} X^\phi Y_\phi
\end{bmatrix}.
\end{gather}

Now, if $A = \begin{bmatrix}  X_\phi &  X_x \\ $ $ \\  X^x & -  X^\phi  \end{bmatrix} $, then $A^{-1} = \frac{1}{|X|^2  } \begin{bmatrix}   X^\phi & X_x \\ $ $ \\ X^x & - X_\phi  \end{bmatrix}$. 
Thus, the derivatives of $v$ are
\begin{gather}
\begin{bmatrix} \nabla_x v \\ $ $ \\ \nabla_{\phi} v  \end{bmatrix}
= \frac{1}{|X|^2  } \begin{bmatrix}   X^\phi & X_x \\ $ $ \\ X^x & - X_\phi\end{bmatrix}
\begin{bmatrix}
\displaystyle \frac{1}{2} \nabla_\phi Y_x  +\frac{1}{2}  \nabla_x Y_\phi
- \frac{1}{m} X_x Y_\phi - \frac{1}{m} X_\phi Y_x \\ $ $ \\
\displaystyle \frac{1}{2}\nabla^x Y_x  - \frac{1}{2} \nabla^\phi Y_\phi  - \frac{1}{m} X^x Y_x + \frac{1}{m} X^\phi Y_\phi
\end{bmatrix}.
\end{gather}
It is then straightforward to check that this matrix equation is equivalent to 
\begin{align} \label{eqn:solved-for-v1}
\nabla_i v  = \frac{1}{|X|^2}\left( \frac{1}{2} X^j \nabla_j Y_i + \frac{1}{2} X^j {\nabla}_i Y_j - \frac{1}{2} X_i {\nabla}^j Y_j - \frac{1}{m} | X|^2 Y_i \right), 
\end{align}
which shows that the derivatives of the variation of the conformal factor can be expressed entirely in terms of $Y$ and the background data.

The next step is to express the equation for the derivatives of $v$ in terms of the  potential function $\Phi = X/|X|^2$.  Begin by writing equation \eqref{eqn:solved-for-v1} using $\Phi$,
\begin{align} \label{eqn:solved-for-v2}
\nabla_i v  = \frac{1}{2} \Phi^j \nabla_j Y_i + \frac{1}{2} \Phi^j {\nabla}_i Y_j - \frac{1}{2} \Phi_i {\nabla}^j Y_j - \frac{1}{m} Y_i, 
\end{align}
Now, observe that
\begin{align}
\begin{aligned}
\nabla_i ( \Phi^j Y_j ) &= \Phi^j \nabla_i Y_j + Y^j \nabla_i \Phi_j,\\
\nabla_j ( \Phi^j Y_i ) &= \Phi^j \nabla_j Y_i + Y_i \nabla^j \Phi_j,\\
\nabla^j ( \Phi_i Y_j ) &= \Phi_i \nabla^j Y_j + Y_j \nabla^j \Phi_i.
\end{aligned}
\end{align}
Inserting these expressions into equation \eqref{eqn:solved-for-v2}, and using successively that $\nabla^j \Phi_j = -1/m$ by Lemma \ref{lemma2.2}, $d \Phi = 0$ and the definitions of $\alpha$ and $\beta$ we get
\begin{align} \label{eqn:solved-for-v3}
\begin{aligned}
\nabla_i v &= \frac{1}{2} \nabla_j ( \Phi^j Y_i ) - \frac{1}{2} Y_i \nabla^j \Phi_j + \frac{1}{2} \nabla_i ( \Phi^j Y_j ) - \frac{1}{2} Y^j \nabla_i \Phi_j - \frac{1}{2} \nabla^j ( \Phi_i Y_j ) + \frac{1}{2} Y_j \nabla^j \Phi_i  - \frac{1}{m} Y_i, \\
&= \frac{1}{2} \nabla^j ( \Phi_j Y_i - \Phi_i Y_j ) + \frac{1}{2} \nabla_i ( \Phi^j Y_j ) + \frac{1}{2} Y^j \left( \nabla_j \Phi_i - \nabla_i \Phi_j\right)  + \left(\frac{1}{2m} - \frac{1}{m}\right) Y_i, \\
&= \frac{1}{2} \nabla^j ( \Phi_j Y_i - \Phi_i Y_j ) + \frac{1}{2} \nabla_i ( \Phi^j Y_j ) - \frac{1}{2m} Y_i  \\
&= -\frac{1}{2} {\vep_i}^j \nabla_j \alpha + \frac{1}{2} \nabla_i \beta - \frac{1}{2m} Y_i.
\end{aligned}
\end{align}

The final step is to obtain second order equations for $\alpha$ and $\beta$.  Returning to the trace of the quasi-Einstein linearization, equation \eqref{eqn:linearization2}, we insert the expression for the derivatives of $v$ above and simplify.  A straightforward computation shows that
\begin{align} 
\Delta \beta - \left( 1 + \frac{1}{m} \right) \nabla^i Y_i + \frac{2}{m} |X|^2 \beta = -4 \lambda v.
\end{align}

A further equation is available.  Since $v$ is a smooth function, its Hessian is symmetric, and thus ${\vep^{ij}} \nabla_j \nabla_i v = 0$.  Applying the expression for $\nabla_i v$ from equation \eqref{eqn:solved-for-v3}, we obtain
\begin{align}
\begin{aligned}
0 &= \vep^{ij} \nabla_j \left(-\frac{1}{2} {\vep_i}^k \nabla_k \alpha + \frac{1}{2} \nabla_i \beta - \frac{1}{2m} Y_i \right) \\
&= -\frac{1}{2} \vep^{ij} {\vep_i}^k  \nabla_j \nabla_k \alpha + \frac{1}{2} \vep^{ij} \nabla_j \nabla_i \beta -\frac{1}{2m}  \vep^{ij} \nabla_j Y_i \\
&= -\frac{1}{2} \Delta \alpha -\frac{1}{2m} \vep^{ij}\nabla_j Y_i, 
\end{aligned}
\end{align}
which we write as
\begin{align}
\Delta \alpha +  \frac{1}{m} \vep^{ij}\nabla_j Y_i = 0.
\end{align}

In conclusion, the linearization equations can be expressed as equation \eqref{eqn:perturbation} above.
\end{proof}

\section{Variations of the Kerr-(A)dS near horizon geometry} \label{sec:KerrAdS}

In this section we specialize to Kerr-(A)dS near horizon metrics, and prove Theorem \ref{theorem3}.  To give an overview of the proof and this section, we let $(g,X)$ be such Kerr-(A)dS metric and consider a smooth one-parameter deformation $(e^{2u(s)} g(0),X(s))$ of the quasi-Einstein equation as in \S \ref{sec:lin}.  In \S \ref{sec:kerradsback} we write the perturbation equation \eqref{eqn:perturbation} with respect to this background and then compute Fourier series in the $\phi$-variable.  In \S \ref{sec:noodd} we argue that there are no nontrivial odd-order modes of the linearization, and in \S \ref{sec:zeromodes} we argue there is a one-dimensional family of perturbations which we already know corresponds to the axisymmetric solutions of the Kerr-(A)dS family.  Finally in \S \ref{sec:global} we present an argument that allows to conclude that only finitely many of the modes may be nonzero, hence proving Theorem \ref{theorem3}.  We conclude with an estimate on the number of modes for each of the Kerr-dS and Kerr-AdS cases.

We begin by treating both the dS and AdS cases simultaneously.  We again label the first variation of these quantities at $s=0$ by $(v, Y)$. Set $m = 2$, $|\ell| = 1$ so that $\lambda = \pm 3$, and 
equation \eqref{metric:KerrAdS}, with either \eqref{metric:comps} or \eqref{metric:comps:dS} to give a chart for the Kerr-(A)dS geometry on $(-1,1)_x \times \bS^1_{\phi}$. We refer to the zeroes of $X$ at $x=\pm 1$ as (north/south) poles.

As $X(0)$ has two zeroes of index one, $X(s)$ will also have two zeroes for sufficiently small $s$. Suppose that $X$ vanishes at exactly two points, $p$ and $q$.  There is a diffeomorphism $\theta: \mathbb{S}^2 \to \mathbb{S}^2$ of $g$ that maps $p$ to the north pole and $q$ to the south pole, and this is the choice we make in equation \eqref{eqn:conf}.  Under this choice, the composition of $X$ with the diffeomorphism (still denoted by $X$ hereafter) now forces $X$ to vanish at the poles.  This justifies the boundary conditions $\alpha (\pm 1) = \beta (\pm 1) = 0$   for the differential equations describing perturbations.

\subsection{The equations in Kerr-(A)dS background} \label{sec:kerradsback}

With the Kerr-(A)dS background, the Laplacian on an arbitrary (sufficiently differentiable) function $F$ takes the form
\begin{align}
\Delta F
= \partial_x \left( \frac{f}{\Psi} \partial_x F \right) + \frac{\Psi}{\nu^2 f} \partial^2_{\phi} F.
\end{align}

The perturbation equations \eqref{eqn:perturbation} under the Kerr-(A)dS background read
\begin{align}
\label{eq5.2}
\begin{aligned}
0=&\, \partial_x \left( \frac{f}{\Psi} \partial_x \alpha \right) + \frac{\Psi}{\nu^2 f} \partial^2_{\phi} \alpha +  \frac{1}{2 \nu} (\partial_{\phi} Y_{x} - \partial_{x} Y_{\phi}),\\
\mp 12 v=&\, \partial_x \left( \frac{f}{\Psi} \partial_x \beta \right) + \frac{\Psi}{\nu^2 f} \partial^2_{\phi} \beta - \frac{3}{2} \left( \partial_x \left(\frac{f}{\Psi} Y_x \right) + \frac{\Psi}{\nu^2 f} \partial_{\phi} Y_{\phi} \right) + \frac{4 a^2 f}{\Psi^2} \beta ,\\
v_x =&\, -\frac{1}{2} \frac{\Psi}{f \nu} \partial_{\phi} \alpha+ \frac{1}{2} \partial_x \beta - \frac{1}{4} Y_x,\\
v_{\phi} =&\, \frac{1}{2} \frac{\nu f}{\Psi} \partial_x \alpha + \frac{1}{2} \partial_{\phi} \beta - \frac{1}{4} Y_{\phi},
\end{aligned}
\end{align}
where the $\mp$ sign in the second equation in \eqref{eq5.2} is should be taken to be $-$ for the dS case and $+$ for the AdS case, with metric components as in either equation \eqref{metric:comps} or \eqref{metric:comps:dS}. Additionally the definitions of $\alpha$ and $\beta$ given in equation \eqref{alphabeta} in the Kerr-(A)dS background are
\begin{align} \label{alphabeta2}
\begin{aligned}
\alpha &= - \frac{x \Psi }{2 \nu f} Y_{\phi} - \frac{r}{2a} Y_x, \\
\beta &= - \frac{x}{2} Y_x + \frac{r \Psi}{2 a \nu f} Y_{\phi}.
\end{aligned}
\end{align}
We solve these equations solved for the components of $Y$ to obtain
\begin{align} \label{eqn:Y-ab}
\begin{aligned}
Y_x =& - \frac{2 a r}{\Psi} \alpha - \frac{2 a^2 x}{\Psi} \beta, \\
Y_{\phi} =& - \frac{2 a^2 \nu x f}{ \Psi^2} \alpha + \frac{2 a \nu r f}{\Psi^2} \beta.
\end{aligned}
\end{align}

The next step will be to decompose this system using Fourier analysis in the $\phi$-variable. We will use a hat to denote the Fourier coefficient; for example,
\begin{equation} 
\alpha = \sum_{k \in \mathbb{Z}} \widehat{\alpha}_k(x) e^{i k \phi},  \; \; \; \mbox{where} \; \;\; \widehat{\alpha}_k(x) := \frac{1}{2\pi} \int_{-\pi}^{\pi} e^{-i k \phi} \alpha(x, \phi) d\phi.
\end{equation}

Under this decomposition, the Fourier modes obey
\begin{align} \label{eqn:fourier-modes}
\begin{gathered}
0=\partial_x \left( \frac{f}{\Psi} \partial_x \halpha_k \right) -  \frac{\Psi}{\nu^2 f} k^2 \halpha_k +  \frac{1}{2 \nu} ( ik \hY_{x} - \partial_{x} \hY_{\phi}),\\
\mp 12 \hv_k=\partial_x \left( \frac{f}{\Psi} \partial_x \hbeta_k \right) - \frac{\Psi}{\nu^2 f} k^2 \hbeta_k - \frac{3}{2} \left( \partial_x \left(\frac{f}{\Psi} \hY_x\right) + \frac{\Psi}{\nu^2 f} ik \hY_{\phi} \right) + \frac{4 a^2 f}{\Psi^2} \hbeta_k,\\
(\hv_k)_x = -\frac{1}{2} \frac{\Psi}{f \nu} ik \halpha_k + \frac{1}{2} (\hbeta_k)_x - \frac{1}{4} \hY_x,\\
ik \hv_k = \frac{1}{2} \frac{\nu f}{\Psi} (\halpha_k)_x + \frac{1}{2} ik \hbeta_k - \frac{1}{4} \hY_{\phi}.
\end{gathered}
\end{align}
We have elected to not indicate the mode number $k$ on the transforms of the vector field components of $Y$ to lighten the notation.  Note that we also use subscripts with $x$ to denote partial derivatives in $x$.

The Fourier transform of the equations \eqref{eqn:Y-ab} are
\begin{align} \label{eqn:hY-ab}
\begin{aligned}
\hY_x &= - \frac{2 a r}{\Psi} \halpha_k - \frac{2 a^2 x}{\Psi} \hbeta_k,\\
\hY_{\phi} &= - \frac{2 a^2 \nu x f}{ \Psi^2} \halpha_k + \frac{2 a \nu r f}{\Psi^2} \hbeta_k.
\end{aligned}
\end{align}
We insert the expressions of equation \eqref{eqn:hY-ab} into system \eqref{eqn:fourier-modes}.  After simplification we obtain
\begin{align} \label{eqn:fourier-modes2}
\begin{gathered}
0=\partial_x \left( \frac{f}{\Psi} \partial_x \halpha_k \right) -  \frac{\Psi}{\nu^2 f} k^2 \halpha_k + \partial_x\left( \frac{a^2 x f}{\Psi^2} \halpha_k - \frac{a r f}{\Psi^2} \hbeta_k \right) - ik \frac{a}{\nu \Psi} \left( r \halpha_k + a x \hbeta_k\right) ,\\
\mp 12 \hv_k=\partial_x \left( \frac{f}{\Psi} \partial_x \hbeta_k \right) - \frac{\Psi}{\nu^2 f} k^2 \hbeta_k + 3\partial_x \left(\frac{a r f}{\Psi^2} \halpha_k + \frac{a^2 x f}{\Psi^2} \hbeta_k \right) + 3\left( \frac{ik a^2 x}{\nu \Psi} \halpha_k - \frac{ik a r}{\nu \Psi} \hbeta_k \right)   + \frac{4 a^2 f}{\Psi^2} \hbeta_k,\\
(\hv_k)_x = -\frac{1}{2} \frac{\Psi}{f \nu} ik \halpha_k + \frac{1}{2} (\hbeta_k)_x + \frac{ a r}{2\Psi} \halpha_k + \frac{ a^2 x}{2\Psi} \hbeta_k,\\
ik \hv_k = \frac{1}{2} \frac{\nu f}{\Psi} (\halpha_k)_x + \frac{1}{2} ik \hbeta_k + \frac{ a^2 \nu x f}{2 \Psi^2} \halpha_k - \frac{ a \nu r f}{2\Psi^2} \hbeta_k.
\end{gathered}
\end{align}

We have the following boundary conditions for each Fourier mode.
\begin{align}
\halpha_k(\pm 1) = 0 = \hbeta_k(\pm 1), \; \; \; \; \forall k \in \mathbb{Z}.
\end{align}

\subsection{No odd-order modes} \label{sec:noodd}

In this section we prove that the combination of boundary conditions on $\halpha_k,$ and $\hbeta_k$, as well as the assumption of analyticity implies that there are no odd-order modes.

\begin{theorem}[No odd-order modes] \label{thm:no-odd-mode} If $k = 2m + 1$ for $m \in \mathbb{N}$ and $\halpha_k, \hbeta_k, \hv_k$ are analytic solutions to system \eqref{eqn:fourier-modes2} that satisfy the boundary conditions
\begin{align}
\halpha_k( \pm 1 ) = 0 = \hbeta( \pm 1 )	
\end{align}
then $\halpha_k \equiv 0, \hbeta_k \equiv 0, \hv \equiv 0$ on $[-1,1]$.
\end{theorem}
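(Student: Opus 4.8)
The plan is to localise at the two poles $x=\pm 1$ --- the only singular points of \eqref{eqn:fourier-modes2} --- and to read off the admissible leading behaviours there by an indicial (Frobenius) analysis. The decisive feature will be that these leading exponents equal $\pm|k|/2$, which are \emph{half-integers} exactly when $k$ is odd; this is incompatible with analyticity at the pole and forces the solution to vanish.

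First I would eliminate $\hv_k$. Since $k=2m+1\neq 0$, the fourth equation of \eqref{eqn:fourier-modes2} expresses $\hv_k$ algebraically through $\halpha_k$, $(\halpha_k)_x$ and $\hbeta_k$. Substituting this into the second equation, the first and second equations become a coupled system of two second-order linear ODEs for the pair $(\halpha_k,\hbeta_k)$, while the third equation merely recovers $(\hv_k)_x$ consistently. Because $f$ has a simple zero at $x=\pm1$ whereas $\Psi(\pm1)=r^2+a^2\neq 0$, the coefficients $f/\Psi$ and $\Psi/f$ make $x=\pm1$ regular singular points of this system; by the evenness of $f$ and $\Psi$ it suffices to treat $x=1$.

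Next I would compute the indicial exponents at $x=1$. Setting $t=1-x$ one has $f\sim c_1 t$ with $c_1=2(1-a^2)$ in the AdS case \eqref{metric:comps} and $c_1=2(1+a^2)$ in the dS case \eqref{metric:comps:dS}, while $\Psi\to\Psi(1)=r^2+a^2$. In the leading balance every coupling term carries an extra power of $f$ (or of $t$) and is therefore $O(t^{\sigma})$ against the dominant $O(t^{\sigma-1})$ terms, so to leading order the two equations decouple, each reducing to the Euler equation
\[
\frac{c_1}{\Psi(1)}\,\sigma^2-\frac{\Psi(1)}{\nu^2 c_1}\,k^2=0 .
\]
The crucial arithmetic is that $\Psi(1)/(\nu c_1)=\tfrac12$ in \emph{both} the AdS and dS cases, whence $\sigma=\pm|k|/2$; a parallel estimate gives $\hv_k\sim t^{|k|/2}$, so eliminating $\hv_k$ introduces no worse singularity.

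Finally, any nonzero local solution of a regular singular system has leading exponent a root of the indicial equation, i.e.\ $\pm|k|/2$, whereas a nonzero analytic function vanishes at $x=1$ to a non-negative \emph{integer} order. For $k=2m+1$ the numbers $\pm|k|/2$ are half-integers, so no nonzero analytic solution survives: hence $\halpha_k\equiv\hbeta_k\equiv 0$ near $x=1$, and by analytic continuation on the connected interval they vanish on all of $[-1,1]$, after which the fourth equation yields $\hv_k\equiv 0$. The main obstacle I anticipate is the bookkeeping for the coupled system: one must check that the coupling terms and the reinstated $\hv_k$ are genuinely subleading, so that the indicial polynomial is exactly $[\sigma^2-(k/2)^2]^2$, and one must rule out an analytic solution produced by resonance, since the exponents $+|k|/2$ and $-|k|/2$ differ by the integer $|k|$. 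Here the boundary conditions $\halpha_k(\pm1)=\hbeta_k(\pm1)=0$ help by discarding at the outset the divergent branch $t^{-|k|/2}$.
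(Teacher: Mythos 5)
Your proposal is correct and follows essentially the same route as the paper: after eliminating $\hv_k$ via the fourth equation, the paper performs the same local series analysis at the poles, showing inductively that the Taylor coefficients are killed because $4j^2-k^2\neq 0$ for all integers $j$ when $k$ is odd --- which is exactly your statement that the indicial roots $\pm k/2$ are half-integers incompatible with analyticity. The only cosmetic differences are that the paper carries out the Frobenius argument by hand with the operator $Z\partial_x$, $Z=f/\Psi$, rather than citing the general theory of regular singular systems, and it concludes using the overlap of the convergent series based at both poles, where you use one pole plus the identity theorem for real-analytic functions.
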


\begin{proof}
Introduce the quantity $Z(x) := \frac{f(x)}{\Psi(x)}$. Now $Z(x)$ is a rational function of $x$ that vanishes simply at $x = \pm 1$.  There are also simple poles at $x = \pm \frac{r}{a} i$, and simple zeroes at either $x = \pm \frac{1}{a}$ (Kerr-dS) and $x = \pm \frac{i}{a}$ (Kerr-AdS case).  

It will be helpful to rewrite system \eqref{eqn:fourier-modes2} one more time.  Inserting the fourth member of the system into the second, and multiplying the first two equations by $Z(x)$ yields a singular system of ordinary differential equations for $\halpha_k, \hbeta_k$ (and decoupled from $\hv_k$) which reads
\begin{eqnarray} \label{eqn:fourier-modes3}
(Z \partial_x)^2 \halpha_k  -  \frac{k^2}{\nu^2} \halpha_k &=& - Z \partial_x\left( \frac{a^2 x}{\Psi} Z \halpha_k - \frac{a r }{\Psi} Z \hbeta_k \right) + ik Z \frac{a}{\nu \Psi} \left( r \halpha_k + a x \hbeta_k\right),\\
(Z \partial_x)^2 \hbeta_k  - \frac{k^2}{\nu^2} \hbeta_k &=& -3 Z \partial_x \left(\frac{a r }{\Psi} Z \halpha_k + \frac{a^2 x }{\Psi} Z \hbeta_k \right) - 3 Z \left( \frac{ik a^2 x}{\nu \Psi} \halpha_k - \frac{ik a r}{\nu \Psi} \hbeta_k \right)  \\
&&- \frac{4 a^2}{\Psi} Z^2 \hbeta_k  \pm \left ( \frac{6 \nu}{ik} Z^2 \partial_x \halpha_k +  6 Z \hbeta_k + \frac{ 6a^2 \nu x}{ik \Psi} Z^2 \halpha_k - \frac{ 6 a \nu r }{ik\Psi} Z^2 \hbeta_k\right ). \nonumber
\end{eqnarray}
The sign difference in the last four terms arising from the dS and AdS cases will not play a role in the remainder of the proof.

Since we assume the existence of an analytic solution $\alpha_k$ and $\beta_k$ that already vanishes at $x=-1$, we may expand the solution in as a series
\begin{align}
\halpha_k(x) = \sum_{j = 1} \halpha_{k,j} (x+1)^{j}, \; \; \mbox{and} \; \; \hbeta_k(x) = \sum_{j = 1} \hbeta_{k,j} (x+1)^{j}.
\end{align}

We wish to use the system to determine a recurrence relation for the Taylor coefficients.  Begin with the left-hand side of equation \eqref{eqn:fourier-modes3}.  Observe that $Z \partial_x$ does not lower the order of vanishing in either powers of $x+1$ or $x-1$, indeed computing an expansion (in powers of $x+1$ for example) we find to leading order in both the dS and AdS cases that
\begin{align}
Z \partial_x \left[ c_{j} (x+1)^{j} \right] = \frac{2}{\nu} \, j \,  c_{j} \, (x+1)^{j} + O(\, (x+1)^{j+1} \, ).
\end{align}
Thus the operator appearing on the left-hand side of equation \eqref{eqn:fourier-modes3} acts on a Taylor coefficient by
\begin{align}
\left( (Z \partial_x)^2 - \frac{k^2}{\nu^2} \right) \left[ c_{j} (x+1)^{j} \right] = \frac{1}{\nu^2} (4 j^2 - k^2) \,  c_{j} \, (x+1)^{j} + O(\, (x+1)^{j+1} \, ).
\end{align}

Now consider the right-hand size of equation \eqref{eqn:fourier-modes3}.  Each term when applied to $c_j (x+1)^j$ vanishes to order higher than $j$ at $x=\pm 1$ simply because products of $Z$ with either $\halpha_k$ or $\hbeta_k$ vanish to one order higher than $\halpha_k$ or $\hbeta_k$, and $Z \partial_x$ does not lower the order of vanishing of a Taylor series in $x+1$ or $x-1$. Thus we may inductively conclude that the power series of both $\halpha_k$ or $\hbeta_k$ computed at either $x= \pm 1$ vanish so long as the coefficient $4 j^2 - k^2$ is never zero.  This is assured when $k$ is odd, and so we conclude the power series for both of these functions vanishes at either pole.  

The power series for either $\halpha_k$ or $\hbeta_k$ centred at $x=\pm 1$ has radius of convergence determined by the distance from $x=\pm 1$ to the nearest pole of a coefficient of the equation.  In this case, the radius of convergence is $\sqrt{1 + \frac{r^2}{a^2}} > 1$, and thus the series at $x=-1$ is analytically extended by the series at $x=1$ due to the overlap on a symmetric open interval about $x=0$.  We conclude $\halpha_k$ or $\hbeta_k$ both vanish on $[-1,1]$.  By the fourth member of system \eqref{eqn:fourier-modes2} we conclude $\hv_k$ vanishes as well.
\end{proof}

\begin{remark}
The argument above also shows that higher even modes with $k = 2m$ vanish to order $m-1$ at $x= \pm 1$. 
\end{remark}

\begin{remark}
The argument above also adapts to the extreme Kerr near-horizon geometry studied in \cite{CST18,JK13}.  See the Appendix for details.
\end{remark}

\subsection{ The $k=0$ mode } \label{sec:zeromodes}

The system \eqref{eqn:fourier-modes2} for the zero mode reads

\begin{align} \label{eqn:zero-mode-eqns}
\begin{aligned}
0=&\, \partial_x \left( \frac{f}{\Psi} \partial_x \halpha_0 \right)  + \partial_x\left( \frac{a^2 x f}{\Psi^2} \halpha_0 - \frac{a r f}{\Psi^2} \hbeta_0 \right) ,\\
\mp 12  \hv_0 =&\, \partial_x \left( \frac{f}{\Psi} \partial_x \hbeta_0 \right)  + 3\partial_x \left(\frac{a r f}{\Psi^2} \halpha_0 + \frac{a^2 x f}{\Psi^2} \hbeta_0 \right) + \frac{4 a^2 f}{\Psi^2} \hbeta_0,\\
(\hv_0)_x =&\, \frac{1}{2} (\hbeta_0)_x + \frac{ a r}{2\Psi} \halpha_0 + \frac{ a^2 x}{2\Psi} \hbeta_0,\\
0 =&\, \frac{1}{2} \frac{\nu f}{\Psi} (\halpha_0)_x + \frac{ a^2 \nu x f}{2 \Psi^2} \halpha_0 - \frac{ a \nu r f}{2\Psi^2} \hbeta_0.
\end{aligned}
\end{align}

The first equation is implied by the fourth. The first-order perturbation of the conformal variation $\hat{v}_0$ does not decouple from the variables $\hat{\alpha}_0$ and $\hat{\beta}_0$ making it less amenable to arguments used in the $k \neq 0$ case. The zero mode corresponds to axisymmetric linearized perturbations of Kerr-(A)dS$_4$ NHG. We find immediately that
\begin{equation}
    \hbeta_0 = \frac{1}{a r} \left[ a^2 x \halpha_0 +  (\halpha_0)_x \Psi \right].
\end{equation} 
Recall that from the boundary conditions, $\hbeta_0(\pm 1) =0$. It then follows from the requirement $\halpha_0(\pm 1) =0$ that we must also have $(\halpha_0)_x(\pm 1) =0$. Substituting this into the expression for $(\hv_0)_x$ gives
\begin{equation}
  (\hv_0)_x = \frac{1}{2 a r} \left(2 a^2 \halpha_0 + 4 a^2 x (\halpha_0)_x + \Psi(\halpha_0)_{xx} \right), 
\end{equation} 
which is easily integrated
\begin{equation}
    \hv_0 = \frac{a x \halpha_0}{r} + \frac{(\halpha_0)_{x} \Psi}{2 a r} + V_0
\end{equation} 
where $V_0$ is a constant. What remains is a 3rd order linear ODE for $\halpha_0$: 
\begin{align}
\begin{aligned}
0 = &  \left[ f \Psi^2 \right] (\halpha_0)_{xxx} +  \left[ (f \Psi^2)^{'} + 2 a^2 x  f \Psi \right] (\halpha_0)_{xx}  +  \left[ 2 f \left(a^4 x^2+7 a^2 r^2\right)+ 6 \Psi \left( a^2 x f' \pm \Psi^2 \right) \right] (\halpha_0)_{x} \\ & +  \left[ 4 a^2 \left( \Psi f' \pm 3  x \Psi^2 - a^2 x f\right)\right] \halpha_0  \pm 12 V_0 a r \Psi^2 
\end{aligned}
\end{align}

which can be written in the following form :
\begin{equation}
    (x+1)^3 P_3(x) y''' + (x+1)^2 P_2(x) y'' + (x+1) P_1(x) y' + P_0(x) y + Q(x)  = 0
\end{equation}
where,
\begin{equation} 
\begin{aligned}
P_3(x) :=& (1-x)(1 \pm a^2x^2) \Psi^2 , \\ 
P_2(x) :=& (f \Psi^2)^{'} + 2 a^2 x  f \Psi   \\
P_1(x) :=& (x+1) \left[ 2 f \left(a^4 x^2+7 a^2 r^2\right)+ 6 \Psi \left( a^2 x f' \pm  \Psi^2 \right) \right] \\ 
P_0(x) :=& 4a^2 (1+x)^2   \left( \Psi f' \pm 3 x \Psi^2 - a^2 x f\right) \\ Q(x) := & \pm 12 (1+x)^2 V_0 a r \Psi^2. 
\end{aligned}
\end{equation} 

This is a third-order linear inhomogeneous ODE with a regular singular point at $x=-1$. On general grounds, given the behavior of the coefficients of derivative terms, we can expect to find three linearly independent solutions on the interval $x\in (-1,1)$, and at most three such solutions on $[-1,1]$.  We are interested in those solutions which extend continuously to $x = -1$ and obey the boundary conditions. We search for Frobenius series solutions of the homogeneous differential equation, which take the form
\begin{equation}
    y(x) = \sum_{n=0} b_n (x+1)^{n+p}.
\end{equation}

Noting that $P_3(-1) = P_2(-1) = 2(1-a^2 \delta)\Psi^2$ and $P_1(-1) = P_0(-1) =0$, the lowest ($n=0$) term gives the indicial equation
\begin{equation}
    p(p-1)^2 P_3(1) x^p =0,
\end{equation} which has roots $p_1 =0$ and $p_2=1$ with multiplicity 2. There will be one series solution associated to $p_2 =1$ (the second linearly independent solution associated to this root will necessarily have a $\log|x+1|$ term which diverges at the pole). This solution will be of the form
\begin{equation}\label{sol1}
    y_1(x) = \sum_{n=0} a_n (x+1)^{n+1} = \sum_{n=1} a_n (x+1)^{n+1}.
\end{equation} where we have imposed the requirement that $y_1'(1) =0$. The solution so produced is determined by one free parameter (which we can take to be $a_1$). We now investigate the possibility of a second linearly independent solution associated to the root $p_1 =0$,
\begin{equation}\label{sol2}
    y_0(x) = \sum_{n=0} b_n (x+1)^n, 
\end{equation} satisfying the boundary conditions.  Noting that $p_2 - p_1 =1$, we are in a situation where the indices differ by an integer. Then either we produce a second solution, or we merely reproduce $y_1(x)$. We require immediately that $b_0 = b_1 =0$. But then \eqref{sol2} is identical to \eqref{sol1}. Hence there will be at most a single admissible one parameter series solution $\halpha_0(x) = y_1(x) + y_P(x)$ which converges on $x \in [-1,1]$ where $y_P$ is a particular solution of the inhomogeneous equation, which satisfies the boundary conditions.  It follows that we must identify this one-parameter family of axisymmetric solution with linearisations lying tangent to the curve of solutions within the Kerr-AdS family given by \eqref{KerrAdScurve}. In particular there are no other allowed zero modes.

\subsection{A global argument for the Kerr-(A)dS NHG} \label{sec:global}

In this section we adapt the global argument given in \cite{JK13} to the Kerr-(A)dS setting. The argument uses integration by parts and an estimation of various $L^2$ norms of terms in system \eqref{eqn:fourier-modes2}. 

For complex-valued $\halpha_k$ and $\hbeta_k$, introduce the vector quantity
\begin{equation}
w = \begin{bmatrix} \halpha_k \\ \hbeta_k \end{bmatrix} 
\end{equation}
and an $L^2$ hermitian inner product on such complex vector-valued functions on $[-1,1]$ by
\begin{equation}
(p,q) = \int_{-1}^1 \overline{p(x)}^T q(x) dx,
\end{equation}
with corresponding $L^2$ norm $\|p\| = \sqrt{(p,p)}$. Finally, introduce the operator $ \bX :=  \sqrt{\frac{f}{\Psi}} \frac{d}{dx}$ acting componentwise. Define
\begin{equation}
\bX^* (\cdot):= - \frac{d}{dx} \left( \sqrt{\frac{f}{\Psi}} \; \cdot \right), 
\end{equation}
with Dirichlet boundary conditions at $x = \pm 1$, so that for $p,q$ sufficiently regular we have $(p, \bX^* q) = (\bX p, q)$. With this notation, for $k \neq 0$ we may use the fourth equation of system \eqref{eqn:fourier-modes2} to solve for $\hv_k$, and insert the result into the second equation.  We may then rewrite the first two equations as 
\begin{align} \label{eqn:func-arg1}
-\bX^* \bX w- \frac{ k^2}{\nu^2} \frac{\Psi}{f} I w - \bX^* (M_1 w)+ i k M_2 w  + M_3 w + \frac{i}{k} M_4 \bX w - \frac{i}{k} M_5 w = 0,
\end{align}
where we have introduced
\begin{align}
\begin{aligned}
M_1 &:= \frac{a }{\Psi} \sqrt{\frac{f}{\Psi}} \begin{bmatrix} a x & - r \\ 3r & 3a x \end{bmatrix}\\
M_2 &:=   \frac{a  }{\nu \Psi} \begin{bmatrix} -r &  -ax \\ 3ax & -3r \end{bmatrix} \\
M_3 &:= \begin{bmatrix}  0 & 0 \\ 0 & \frac{4 a^2 f}{\Psi^2} \pm 6 \end{bmatrix} \\
M_4 &:= \mp 6 \nu  \sqrt{\frac{f}{\Psi}} \begin{bmatrix} 0 & 0 \\ 1 & 0 \end{bmatrix} \\
M_5 &:= \mp 6 a  \nu \frac{f}{\Psi^2} \begin{bmatrix} 0 & 0 \\ -ax & r \end{bmatrix}.
\end{aligned}
\end{align}

Note that in $M_3, M_4, M_5$ above, the sign ambiguity is resolved by choosing the upper sign for the Kerr-dS metric and the lower sign for Kerr-AdS metric.  This ambiguity does not play a role in our main result. 

We now state and prove one of our main results.  Recall that the Kerr-(A)dS metrics are specified by a choice of either $a$ or $r$ as parameter, as the other value is constrained by equations \eqref{KerrAdScurve} or \eqref{KerrdScurve}.

\begin{theorem} \label{thm:no-big-modes}  For fixed value of the parameter $a$ or $r$, there are no nontrivial solutions to system \eqref{eqn:fourier-modes2} (equivalently \eqref{eqn:func-arg1}) for $|k|$ sufficiently large.
\end{theorem}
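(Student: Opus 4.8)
The plan is to test equation \eqref{eqn:func-arg1} against $w$ in the Hermitian inner product $(\cdot,\cdot)$ and extract an energy identity in which the single term carrying two powers of $k$ dominates all others. First I would pair \eqref{eqn:func-arg1} with $w$ and integrate by parts using the adjoint relation $(\bX p, q) = (p, \bX^* q)$. Its boundary terms vanish because $\sqrt{f/\Psi}$ vanishes at $x = \pm 1$ and $w$ satisfies the Dirichlet conditions $\halpha_k(\pm 1) = \hbeta_k(\pm 1) = 0$, so the first term becomes $-\|\bX w\|^2$ and the $M_1$ term becomes $-(\bX w, M_1 w)$. Writing the weighted norm $\|w\|_\star^2 := \int_{-1}^1 \frac{\Psi}{f}|w|^2\,dx$, the identity reads
\begin{equation*}
\|\bX w\|^2 + \frac{k^2}{\nu^2}\|w\|_\star^2 = -(\bX w, M_1 w) + ik(w, M_2 w) + (w, M_3 w) + \frac{i}{k}(w, M_4 \bX w) - \frac{i}{k}(w, M_5 w).
\end{equation*}
Taking real parts leaves the left side unchanged, as it is manifestly real and nonnegative.

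Next I would estimate the right-hand side. On $[-1,1]$ the matrices $M_1,\dots,M_5$ are bounded: $\Psi$ is bounded below away from zero, and the factors $\sqrt{f/\Psi}$ and $f/\Psi^2$ are bounded (indeed they vanish at the poles). Moreover $\Psi/f \geq c_0 := r^2/(1+a^2) > 0$ on $(-1,1)$, so $\|w\|^2 \leq c_0^{-1}\|w\|_\star^2$. Cauchy--Schwarz together with Young's inequality then absorbs the gradient contributions, giving $|(\bX w, M_1 w)| \leq \tfrac14\|\bX w\|^2 + C\|w\|_\star^2$ and $\frac{1}{|k|}|(w, M_4 \bX w)| \leq \frac{C}{|k|}\left(\|\bX w\|^2 + \|w\|_\star^2\right)$. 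The terms from $M_3$ and $M_5$ are bounded by $\|w\|_\star^2$ (times a constant, and by $1/|k|$ for $M_5$). The one genuinely worst term is the real part of $ik(w, M_2 w)$, which is of size $C|k|\,\|w\|_\star^2$, linear in $k$.

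Collecting these estimates and moving the $\|\bX w\|^2$ contributions to the left, for $|k|$ large the coefficient of $\|\bX w\|^2$ stays positive; discarding that nonnegative term yields
\begin{equation*}
\frac{k^2}{\nu^2}\|w\|_\star^2 \leq \left( C_1 |k| + C_2 + \frac{C_3}{|k|}\right)\|w\|_\star^2.
\end{equation*}
Once $|k|/\nu^2 > C_1 + C_2/|k| + C_3/k^2$ this forces $\|w\|_\star = 0$, hence $w \equiv 0$, and then the fourth equation of \eqref{eqn:fourier-modes2} gives $\hv_k \equiv 0$. The heart of the argument is thus the quadratic-versus-linear competition in $k$: the coercive term $\frac{k^2}{\nu^2}\|w\|_\star^2$ beats every other term, each of which is at most $O(|k|)\,\|w\|_\star^2$. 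I expect the main obstacle to be the bookkeeping around the singular weight $\Psi/f$: I must check that all coefficient matrices remain bounded up to the poles, that $\|w\|_\star$ is finite for admissible (analytic, boundary-vanishing) $w$, and that no boundary terms survive the integration by parts. All three follow from the simple vanishing of $f$ and of $w$ at $x = \pm 1$, which near a pole makes $\frac{\Psi}{f}|w|^2 = O(1\mp x)$ integrable.
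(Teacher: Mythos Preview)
Your proposal is correct and follows essentially the same route as the paper's proof: pair equation \eqref{eqn:func-arg1} with $w$, integrate by parts, and exploit the fact that the $\frac{k^2}{\nu^2}\|\sqrt{\Psi/f}\,w\|^2$ term grows quadratically in $k$ while every remaining term is at most linear. The only differences are organizational. The paper introduces the ratios $x=\|\bX w\|/\|w\|$ and $y=\frac{1}{\nu}\|\sqrt{\Psi/f}\,w\|/\|w\|$, rewrites the $M_2$ term via $\widetilde{M_2}=\nu\sqrt{f/\Psi}\,M_2$, and then completes the square in $x$ and $|k|y$ to obtain an explicit inequality \eqref{eqn:bound-on-k} for $|k|$; you instead work with the weighted norm $\|w\|_\star$, use $\|w\|^2\le c_0^{-1}\|w\|_\star^2$, and apply Young's inequality to absorb the gradient terms. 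Your version is slightly more streamlined for the bare existence statement, while the paper's version gives a concrete numerical bound on $|k|$ in terms of the matrix operator norms, which is then used downstream to count modes. Your care in checking that $\|w\|_\star<\infty$ and that the boundary terms vanish is a point the paper leaves implicit.
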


\begin{proof}
Suppose that $w$ is a nontrivial solution to equation \eqref{eqn:func-arg1}. We will show that this places a restriction on $|k|$.  Pairing this equation with $w$ and integrating by parts one may obtain
\begin{equation}
\begin{split}
0=&\, -\left\|\bX w\right\|^2 - \frac{ k^2}{\nu^2} \left\| \sqrt{\frac{\Psi}{f}} w \right\|^2 - (\bX w, M_1w) + ik (w, M_2w) + (w, M_3 w)\\
&\, + \frac{i}{k} \overline{(M_4 w, \bX w)} - \frac{i}{k} (w, M_5 w).
\end{split}
\end{equation}
Now set $x := \| \bX w\|/\|w\|$ and $y:=  \left\| \frac{1}{\nu}\sqrt{\frac{\Psi}{f}} w \right\|/\|w\|$, and rewrite the equation as 
\begin{equation}
\begin{split}
\|w\|^2 x^2 + k^2 \|w\|^2 y^2 =&\, - ( \bX w, M_1w) + ik \left( \frac{1}{\nu} \sqrt{\frac{\Psi}{f}} w, \widetilde{M_2} w\right)\\
&\, + (w, M_3 w) +\frac{i}{k} \overline{(M_4 w,  \bX w)} - \frac{i}{k} (w, M_5 w),
\end{split}
\end{equation}
where we have introduced $\widetilde{M_2} =  \nu \sqrt{\frac{f}{\Psi}} M_2$ for convenience.  Applying the Cauchy-Schwarz inequality we obtain
\begin{align}
\begin{aligned}
&\, \|w\|^2 x^2 + k^2 \|w\|^2 y^2 \\
\leq &\,  |( \bX w, M_1w)| + |k| \left|\left( \frac{1}{\nu}  \sqrt{\frac{\Psi}{f}} w, \widetilde{M_2}  w\right)\right| + |(w, M_3 w)| +\frac{1}{|k|} |(M_4 w, \bX w)| + \frac{1}{|k|} |(w, M_5 w)| \\
\leq &\, |( \bX w, M_1w)| + |k| \left| \left( \frac{1}{\nu} \sqrt{\frac{\Psi}{f}} w,\widetilde{M_2}  w \right) \right| + |(w, M_3 w)| + |(M_4 w, \bX w)| + |( w, M_5 w)| \\
\leq &\, x \| M_1 \| \|w\|^2  + y |k|  \left\| \widetilde{M_2}  \right\| \|w\|^2 + \| M_3 \| \|w\|^2 + x \|M_4\| \|w\|^2 + \| M_5 \| \|w\|^2,
\end{aligned}
\end{align}
where we have used the fact that $|k| \geq 1$. We therefore obtain
\begin{align}
 x^2 + k^2 y^2 & \leq x (\| M_1 \| +\|M_4\| )  + y |k|  \left\| \widetilde{M_2} \right\|  + \| M_3 \|  + \| M_5 \|.
\end{align}
We can write this inequality as
\begin{align}
\left( x - \frac{ \|M_1\| + \|M_4\| }{2} \right)^2 + \left( |k| y - \frac{ \| \widetilde{M_2} \| }{2}\right)^2 \leq \frac{ (\|M_1\| + \|M_4\|)^2}{4} + \frac{ \| \widetilde{M_2} \|^2 }{4} + \|M_3\| + \|M_5\|.
\end{align}
It follows that 
\begin{align}
\left( |k| y - \frac{ \| \widetilde{M_2} \| }{2}\right)^2 \leq \frac{ (\|M_1\| + \|M_4\|)^2}{4} + \frac{ \| \widetilde{M_2} \|^2 }{4} + \|M_3\| + \|M_5\|,
\end{align}
and thus
\begin{align} \label{eqn:bound-on-k}
 |k|  \leq  \frac{\| \widetilde{M_2} \| }{2y} + \, \frac{1}{y} \sqrt{\frac{ (\|M_1\| + \|M_4\|)^2}{4} + \frac{ \| \widetilde{M_2} \|^2 }{4} + \|M_3\| + \|M_5\|},
\end{align}

We proceed to establish an upper bound on $|k|$ by bounding $y$ from below and all remaining quantities from above.  Recalling the definition of $y$,
\begin{equation}
    y =  \left\| \frac{1}{\nu}\sqrt{\frac{\Psi}{f}} w \right\|/\|w\| \geq \frac{1}{\nu} \inf_{x \in [-1,1]} \sqrt{\frac{\Psi}{f}}.
\end{equation}
In the Kerr-AdS case, $y \geq \frac{r}{\nu}$, while in the Kerr-dS case, $y \geq \frac{r}{ \nu \sqrt{1+a^2} }$, which in either case is a positive constant depending on the parameter.

As for the matrix norms, recall the operator norm on matrices is induced by the inner product on $L^2([-1,1];\mathbb{C}^2) $ as is given by
\begin{align}
\| M \| := \sup\left\{  \| M z\| : \, z \in L^2([-1,1];\mathbb{C}^2), \, \|z\|=1 \right\}.
\end{align}
For each of the matrices $M_1, \widetilde{M_2}, M_3, M_4, M_5$ defined above the components are continuous functions of $x$.  Thus each norm may be bounded a constant depending on the parameter.  Equation \eqref{eqn:bound-on-k} provides a bound on the number of nontrivial solutions to equation \eqref{eqn:fourier-modes2}, completing the proof.
\end{proof}

In view of Theorem \ref{thm:no-big-modes}, it is natural to estimate the maximum number of nontrivial Fourier modes explicitly for certain values of the parameter.  Here we split the Kerr-dS and Kerr-AdS cases as a favorable sign appears to strengthen the result for the Kerr-dS case.

We will explicitly estimate the matrix norms as follows.  If $M := \begin{bmatrix} m_1 & m_2 \\ m_3 & m_4 \end{bmatrix}$ is a matrix of real-valued coefficient functions and 
$z := \begin{bmatrix}
     z_1 \\ z_2
 \end{bmatrix}$ is complex-valued, then
\begin{align}
\begin{aligned}
     \| M \|^2 & = \sup\left\{  \| M z\| : \, z \in L^2([-1,1];\mathbb{C}^2), \, \|z\|=1 \right\}
 \\ & = \sup\left\{  \int_{-1}^{1} \left| m_1 z_1 + m_2 z_2 \right|^2 + \left| m_3 z_1 + m_4 z_2 \right| ^2 \, dx:  \, z \in L^2([-1,1];\mathbb{C}^2), \, \| z \| = 1  \right\} \\ & \leq  \max \left\{ \sup_{x \in [-1,1]} m_1^2 + m_3^2 + |m_1 m_2 + m_3 m_4 |,  \sup_{x \in [-1,1]} m_2^2 + m_4^2 + |m_1 m_2 + m_3 m_4 | \right\}.
\end{aligned}
\end{align}

\subsubsection{The Kerr-dS NHG}  Choose $r$ as the  parameter.  Recall that $ 0 < r < \frac{1}{\sqrt{3}}$, and $0 < a^2 < r^2 < \frac{1}{3}$.  One easily checks that for all $x \in [-1,1]$,
\begin{align}
0 \leq f(x) \leq 1, \quad  \frac{f(x)}{\Psi(x)} \leq \frac{1}{r^2}, \quad \, \text{ and } \, r^2 \leq  \Psi(x) \leq r^2 + a^2.
\end{align}
Then for example for $M_1$,
\begin{align}
\begin{aligned}
m_1^2 + m_3^2 + |m_1 m_2 + m_3 m_4 | &= \left( \frac{a}{\Psi} \sqrt{\frac{f}{\Psi}} \right)^2 [a^2 x^2 + 9 r^2 + 8 a r |x|], \; \mbox{and},\\
m_2^2 + m_4^2 + |m_1 m_2 + m_3 m_4 | &= \left( \frac{a}{\Psi} \sqrt{\frac{f}{\Psi}} \right)^2 [r^2 + 9 a^2 x^2  + 8 a r |x|].
\end{aligned}
\end{align}
Applying the estimates above to bound each of these functions of $x$ on $[-1,1]$ and then taking the maximum estimate we find
\begin{align}
\| M_1 \| \leq \frac{3 \sqrt{2}}{r}.
\end{align}

For $\widetilde{M_2}$,  we find
\begin{align}
\begin{aligned}
m_1^2 + m_3^2 + |m_1 m_2 + m_3 m_4 | &= \left( \frac{a}{ \Psi}  \sqrt{\frac{f}{\Psi}}\right)^2 [r^2 + 9 a^2 x^2 + 8 a r |x|], \; \mbox{and},\\
m_2^2 + m_4^2 + |m_1 m_2 + m_3 m_4 | &= \left( \frac{a}{ \Psi}  \sqrt{\frac{f}{\Psi}}\right)^2 [a^2 x^2 + 9 r^2  + 8 a r |x|].
\end{aligned}
\end{align}
This results in extremization of the same quantities as $M_1$.  So
\[ \| \widetilde{M_2} \| \leq \frac{3 \sqrt{2}}{r}. \]

For $M_3$, 
\begin{align}
\begin{aligned}
m_1^2 + m_3^2 + |m_1 m_2 + m_3 m_4 | &= 0 \; \mbox{and},\\
m_2^2 + m_4^2 + |m_1 m_2 + m_3 m_4 | &= \left(\frac{4 a^2 f}{\Psi^2} + 6\right)^2.
\end{aligned}
\end{align}
Thus we obtain
\begin{align}
\| M_3 \| \leq \frac{4}{r^2} + 6 \leq \frac{6}{r^2}.
\end{align}

For $M_4$,
\begin{align}
\begin{aligned}
m_1^2 + m_3^2 + |m_1 m_2 + m_3 m_4 | &= 36 \nu^2 \frac{f}{\Psi} \; \mbox{and},\\
m_2^2 + m_4^2 + |m_1 m_2 + m_3 m_4 | &= 0.
\end{aligned}
\end{align}
We obtain,
\begin{align}
\| M_4 \| \leq \frac{6 \nu}{r}.
\end{align}

Finally, for $M_5$,
\begin{align}
\begin{aligned}
m_1^2 + m_3^2 + |m_1 m_2 + m_3 m_4 | &= \left( 6 a \nu \frac{f}{ \Psi^2}\right)^2 [ a^2 x^2 + ar |x|], \; \mbox{and},\\
m_2^2 + m_4^2 + |m_1 m_2 + m_3 m_4 | &=\left( 6 a \nu \frac{f}{ \Psi^2}\right)^2 [ r^2 + ar |x|],
\end{aligned}
\end{align}
and we obtain 
\begin{align}
\| M_5 \| \leq \frac{6 \sqrt{2} \nu}{r^2} .
\end{align}

Finally, recall from the proof of Theorem \ref{thm:no-big-modes} that the nonzero mode numbers are bounded above by the inequality of \eqref{eqn:bound-on-k}.  Thus, noting that the quantity $y \geq r/\nu$, we find
\begin{align} 
|k|  &\leq  \frac{1}{y} \left( \frac{\| \widetilde{M_2} \| }{2} + \, \sqrt{\frac{ (\|M_1\| + \|M_4\|)^2}{4} + \frac{ \| \widetilde{M_2} \|^2 }{4} + \|M_3\| + \|M_5\|} \right)\\
& \leq \frac{\nu}{r} \left( \frac{3 \sqrt{2}}{2r} + \, \sqrt{\frac{ ( 3 \sqrt{2}  + 6 \nu )^2}{4r^2} + \frac{ 18  }{4r^2} + \frac{6}{r^2} + \frac{6\sqrt{2} \nu}{r^2}} \right)\\
& \leq \frac{\nu}{r^2} \left( \frac{3 \sqrt{2}}{2} + \, \sqrt{\frac{ ( 3 \sqrt{2}  + 6 \nu)^2}{4} + \frac{ 18  }{4} + 6 + 6\sqrt{2}} \nu \right).
\end{align}
One may estimate the maximum value of $|k|$ using $\nu/r^2 \leq 2$ and $\nu \leq 2/3$.  One may also insert use the relation between $r$ and $a$ in equation \eqref{KerrdScurve}, and then a computer algebra packagage to find the maximum value over $r \in (0, 1/\sqrt{3})$.  Using Mathematica we found that $|k| \leq 11.99$.  Since $k$ must be an integer, we conclude that a putative nontrivial mode must satisfy $|k| \leq 11.$  

\subsubsection{The Kerr-AdS NHG}  In this case recall that $ 0 < r^2 < a^2 < 1$.  Again on $x \in [-1,1]$ we have
\begin{align}
0 \leq f(x) \leq 1, \quad  \frac{f(x)}{\Psi(x)} \leq \frac{1}{r^2}, \quad \, \text{ and } \, r^2 \leq  \Psi(x) \leq r^2 + a^2.
\end{align}
The estimation of the matrix norms proceeds much as before, and in fact we obtain the same bounds for the matrix norm of each of $M_1, \widetilde{M_2}, M_4$ and $M_5$.  For $M_3$, we adjust the estimate slightly to obtain
\begin{align}
\| M_3 \| \leq \left|\frac{4 a^2 f}{\Psi^2} - 6\right| \leq \frac{4 a^2 f}{\Psi^2} + 6  \leq \frac{12}{r^2} + 6 \leq \frac{14}{r^2},
\end{align}
where we used that $ r^2 < 1/3$.

Estimating the maximal $|k|$ as before we obtain
\begin{align} 
|k|  &\leq  \frac{1}{y} \left( \frac{\| \widetilde{M_2} \| }{2} + \, \sqrt{\frac{ (\|M_1\| + \|M_4\|)^2}{4} + \frac{ \| \widetilde{M_2} \|^2 }{4} + \|M_3\| + \|M_5\|} \right)\\
& \leq \frac{\nu}{r} \left( \frac{3 \sqrt{2}}{2r} + \, \sqrt{\frac{ ( 3 \sqrt{2}  + 6 \nu )^2}{4r^2} + \frac{ 18  }{4r^2} + \frac{14}{r^2} + \frac{6\sqrt{2} \nu}{r^2}} \right)\\
& \leq \frac{\nu}{r^2} \left( \frac{3 \sqrt{2}}{2} + \, \sqrt{\frac{ ( 3 \sqrt{2}  + 6 \nu)^2}{4} + \frac{ 18  }{4} + 14 + 6\sqrt{2}} \nu \right).
\end{align}

This time we cannot expect an unrestricted bound on $k$ as $\nu \to \infty$ as $a \to 1^-$.  Using a computer algebra package, by maximizing in $r$ we find that if $a < 0.01$, then $|k| < 13.84$.  Since $k$ must be an integer, we conclude that a putative nontrivial mode must satisfy $|k| \leq 13.$

\subsection{Proof of Theorem \ref{theorem3}} We now prove the main Theorem.

Any analytic one-parameter family $(M, g(s), X(s))$ of NHGs with $s=0$ corresponding to a Kerr-(A)dS metric can be expressed using the Jezierski-Kami\'nksi formalism of \S \ref{sec:JK} as a nontrivial solution to the coupled system \eqref{eq5.2} with boundary conditions $\alpha(\pm 1) = 0 = \beta(\pm 1)$.  Computing a Fourier transform in the axial variable, we find that Fourier modes satisfy the equations of system \eqref{eqn:fourier-modes2}.  Theorem \ref{thm:no-big-modes} then implies that there are no nontrivial solutions to this equation for $|k|$ sufficiently large.  This proves that the space of solutions of the linearization of the NHG equation is finite-dimensional.

To estimate the dimension of the space of deformations in the Kerr-dS case, recall there are no modes for $k$ odd, and further solutions of equation \eqref{eq5.2} are real-valued so that $a_{-k} = \overline{a_k}$. Calculations of the previous section restrict the nonzero modes of Kerr-dS to have  $|k|\leq 11$. Thus we conclude the space of solutions is parametrized by at most the modes with $k = 0, 2, 4, 6, 8, 10$.  The nonzero modes are complex-valued and so two dimensional over $\mathbb{R}$. We conclude the space of the solutions is at most 11 dimensional.

This concludes the proof of Theorem \ref{theorem3}.

\appendix

\section{Two remarks on the paper \cite{JK13} on  the extreme Kerr NHG}

In this appendix, we make a few observations about the extreme Kerr near horizon geometry as considered in \cite{CST18, JK13}, which was the inspiration for this paper.  We show that a calculation similar to the one given in this paper implies that the odd modes also vanish for perturbations of the extreme Kerr NHG.  We also believe a norm computation given in \cite{JK13} is incorrect.

In this appendix we redefine several quantities and defer to notation in \cite{JK13}.  Similar to our equation \eqref{eqn:fourier-modes2} above, equation (57) from \cite{JK13} represents the coupled system of ordinary differential equations satisfied by the Fourier modes $v_k$ of a perturbation of the extreme Kerr NHG:
\begin{equation} \label{app:eqn0} \partial_x \left( a^2 \partial_x v_k \right) - \frac{k^2}{a^2} v_k + D v_k + \partial_x (a^2 B v_k) + i k C v_k =0, \end{equation}
where for the extreme Kerr near horizon geometry these quantities are
\begin{itemize}
\item $\displaystyle a^2 = a(x)^2 := 2 \frac{1-x^2}{1+x^2}, \; \mbox{for} \; x \in [-1,1]$,
\item $\displaystyle v_k = \begin{bmatrix} \alpha_k \\ \beta_k \end{bmatrix}: [-1,1] \to \mathbb{C}^2$ are the Fourier modes, 
\item $\displaystyle B = \frac{1}{1+x^2} \begin{bmatrix} x & -1 \\ 3 & 3x \end{bmatrix}$,
\item $\displaystyle C = \frac{1}{1+x^2} \begin{bmatrix} 1 & x \\ -3x & 3 \end{bmatrix}$,
\item $\displaystyle D = \frac{4 a^2}{1+x^2} \begin{bmatrix} 0 & 0 \\ 0 & 1 \end{bmatrix}$.
\end{itemize}

\begin{prop}
If $v_k$ is an analytic solution to equation \eqref{app:eqn0} satisfying $v_{k}(\pm 1) = 0$, then $v_{2k-1} \equiv 0$, for all $k\in \mathbb{N}$.
\end{prop}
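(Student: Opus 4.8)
The plan is to mimic the argument of Theorem~\ref{thm:no-odd-mode}, adapted to the rational coefficient $a(x)^2 = 2(1-x^2)/(1+x^2)$ of the extreme Kerr NHG. First I would put the system into the conjugated form used there. Multiplying equation~\eqref{app:eqn0} by $a^2$ and writing $W := a^2$, the leading part becomes $(W\partial_x)^2 v_k - k^2 v_k$, since $a^2\partial_x(a^2\partial_x v_k) = (W\partial_x)^2 v_k$ and $a^2\cdot(-k^2/a^2)v_k = -k^2 v_k$. The system then reads
\begin{equation}
(W\partial_x)^2 v_k - k^2 v_k = -W D v_k - W\partial_x(W B v_k) - ik\, W C v_k,
\end{equation}
a singular ODE system whose regular singular points $x = \pm 1$ are exactly the simple zeroes of $W$, and in which the matrices $B,C,D$ are regular at $x=\pm 1$ while every term on the right carries at least one factor of $W$.

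Next I would carry out the Taylor/Frobenius analysis at $x=-1$ on the vector $v_k$. Expanding $W$ there gives $W = 2(x+1) + O((x+1)^2)$, so $W\partial_x$ sends $(x+1)^j$ to $2j(x+1)^j + O((x+1)^{j+1})$ and thus does not lower the order of vanishing. Consequently the left-hand operator acts on a leading Taylor term by
\begin{equation}
\left( (W\partial_x)^2 - k^2 \right)\left[ c_j (x+1)^j \right] = (4j^2 - k^2)\, c_j\, (x+1)^j + O((x+1)^{j+1}),
\end{equation}
while each right-hand term, carrying the factor $W$ (and $W\partial_x$ not lowering order), vanishes to order strictly greater than $j$. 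Using the boundary condition $v_k(-1)=0$ as the base case, an induction on $j$ shows that every Taylor coefficient of $v_k$ at $x=-1$ vanishes, provided the indicial factor $4j^2 - k^2$ is never zero. For odd mode number $k$ this holds because $4j^2$ is divisible by $4$ whereas $k^2$ is odd, so they can never coincide; this covers exactly the modes $v_{2k-1}$ of the statement. Since the leading operator is scalar, the matrix coupling among the components of $v_k$ enters only through $B,C,D$ on the right and plays no essential role.

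Finally I would upgrade the vanishing of the Taylor series to vanishing on all of $[-1,1]$ by an analyticity argument. The coefficients of the rewritten equation are rational, with finite singularities only at the regular singular points $x=\pm 1$ and at $x=\pm i$ (from $1+x^2$); hence the series centred at $x=-1$ has radius of convergence at least $\dist(-1,\{\pm i\}) = \sqrt{2} > 1$. Therefore $v_k$ vanishes on an open interval about $x=-1$, and since $v_k$ is analytic on $[-1,1]$ the identity theorem forces $v_k \equiv 0$ there. I expect the only delicate points to be verifying that all right-hand terms genuinely raise the order of vanishing (which follows from the uniform appearance of the factor $W$ together with the regularity of $B,C,D$ at $x=\pm 1$) and confirming that the radius of convergence $\sqrt{2}$ is large enough to reach past $x=0$; both are routine once the equation is written in the conjugated form above.
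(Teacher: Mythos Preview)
Your proposal is correct and follows essentially the same route as the paper: multiply by $a^2$ to get the conjugated form $(a^2\partial_x)^2 v_k - k^2 v_k = (\text{terms carrying an extra factor of }a^2)$, read off the indicial factor $4j^2-k^2$, and use oddness of $k$ to force all Taylor coefficients at $x=-1$ to vanish. The only cosmetic differences are that the paper separates $v_k$ into its components $\alpha,\beta$ whereas you keep the vector form, and that the paper closes by invoking the overlapping-series argument of Theorem~\ref{thm:no-odd-mode} while you use the identity theorem directly from the assumed analyticity of $v_k$; your endgame is in fact slightly cleaner, since once all derivatives of an analytic $v_k$ vanish at $x=-1$ the radius-of-convergence computation is not strictly needed.
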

\begin{proof}  

Return to equation \eqref{app:eqn0} for general $k$.  We drop the $k$ index in $\alpha$ and $\beta$. Multiply the equations by $a^2$ and separate into components, to obtain
\begin{equation} \begin{cases} 
a^2 \partial_x \left( a^2 \partial_x \alpha \right) - k^2 \alpha + \phantom{\frac{4 a^4}{1+x^2} \beta} +  a^2 \partial_x \left(\frac{a^2}{1+x^2} \left(x \alpha -\beta \right) \right) + a^2 \left(\frac{ik}{1+x^2}\right) \left( \alpha + x \beta \right) &=0, \\
a^2 \partial_x \left( a^2 \partial_x \beta \right) -k^2 \beta + \frac{4 a^4}{1+x^2} \beta   + a^2 \partial_x \left(\frac{a^2}{1+x^2} \left(3 \alpha + 3x \beta \right) \right) + a^2 \left(\frac{ik}{1+x^2}\right) \left( - 3 x \alpha + 3 \beta \right) &=0. 
\end{cases}
\end{equation}

We rewrite this system as
\begin{equation} \label{eqn:main3} \begin{cases} 
(a^2 \partial_x)^2 \alpha  - k^2 \alpha &= -  a^2 \partial_x \left(\frac{a^2}{1+x^2} \left(x \alpha -\beta \right) \right) - a^2 \left(\frac{ik}{1+x^2}\right) \left( \alpha + x \beta \right), \\
(a^2 \partial_x)^2 \beta  -k^2 \beta &= - \frac{4 a^4}{1+x^2} \beta   - a^2 \partial_x \left(\frac{a^2}{1+x^2} \left(3 \alpha + 3x \beta \right) \right) - a^2 \left(\frac{ik}{1+x^2}\right) \left( - 3 x \alpha + 3 \beta \right). 
\end{cases}
\end{equation}

Recall that $a^2$ is analytic near $x=\pm 1$ and vanishes simply at these points.  Thus the operator $a^2 \partial_x$ and its iterates do not lower the order of vanishing of power series in powers of either $x+1$ or $x-1$.  On the other hand, the right-hand side has higher-order vanishing in these powers.

For simplicity we work at $x=-1$, the argument for $x=1$ is similar.  We have assumed that $\alpha$ and $\beta$ have a smooth power series solution in powers of $x+1$. We know that $\alpha(-1) = 0= \beta(-1)$, thus the power series expansion of these functions vanishes to first order.  Now assume that for some integer $n>0$ that $\alpha$ and $\beta$ vanish to order $n-1$ in $x+1$, thus
\[ \alpha = \sum_{j=n}^{\infty} \alpha_j (x+1)^j, \; \beta = \sum_{j=n}^{\infty} \beta_j  (x+1)^j. \]
Working at leading order, one finds the coefficients of the left hand side of the system \eqref{eqn:main3} to be
$(4 n^2 - k^2) \alpha_n$, and $(4n^2 - k^2) \beta_n$, respectively.  Noting that products of terms like $a^2$ and either $\alpha$ or $\beta$ vanish to order $n+1$, we see the entire right hand side vanishes to order $n+1$ or higher.  Thus if $k=0$ or $k$ is odd, then $4n^2 - k^2$ is never zero, and $\alpha_n = 0 = \beta_n$ for all $n$ by induction.  When $k$ is even, we only obtain vanishing up to a certain order depending on the mode number $k$.  The remainder of the proof is similar to the proof of Theorem \ref{thm:no-odd-mode}.
\end{proof}

Finally, the global argument given in \cite{JK13} for the elimination of Fourier modes requires the estimation of matrix norms.  We were unable to precisely replicate the computation of matrix norms in Appendix C of \cite{JK13}.  Recalling the definition of $a(x)$ and the matrix $B$ in this appendix, consider the (constant!) vector-valued function 
\[ v = \begin{pmatrix} \frac{1}{\sqrt{2}} \\ 0 \end{pmatrix}. \]
It is easy to check that $(v|v) = 1$ for the inner product in \cite[Theorem 5]{JK13}. We also have
\[ a B v = \frac{a(x)}{1+x^2}  \begin{pmatrix} \frac{x}{\sqrt{2}} \\ \frac{3}{\sqrt{2}} \end{pmatrix}. \]
A Mathematica computation yields 
\[ \| aB v \|^2 = (a Bv | a B v) = \int_{-1}^1 \frac{a(x)^2}{(1+x^2)^2} \left( \frac{x^2}{2} + \frac{9}{2} \right) dx = 5 + \pi. \]
In particular $\|a B v\| \approx 2.853$.  Recalling that an operator norm is defined by
\[ \| a B \| = \sup \{ \| aB v\|: \|v\|=1\}, \]
this calculation shows that $2.853 \leq \| a B \|.$  But the result given in \cite[p 1002]{JK13} is $\| a B \| = \sqrt{6} \approx 2.449$.

\end{document}